
\documentclass{article}

\usepackage{microtype}
\usepackage{graphicx}
\usepackage{subcaption}
\usepackage{booktabs} 

\usepackage{hyperref}




\usepackage[accepted]{icml2026}

\usepackage{amsmath}
\usepackage{amssymb}
\usepackage{mathtools}
\usepackage{amsthm}

\usepackage[capitalize,noabbrev]{cleveref}

\usepackage{dsfont}
\usepackage{thmtools}
\usepackage{thm-restate}
\usepackage{xspace}
\usepackage{array}
\usepackage{enumitem}
\usepackage{multirow}
\usepackage{float}
\usepackage{stfloats}
\usepackage{textcase}

\theoremstyle{plain}
\newtheorem{theorem}{Theorem}[section]
\newtheorem{problem}{Problem}[section]

\theoremstyle{definition}
\newtheorem{definition}[theorem]{Definition}

\theoremstyle{remark}

\usepackage[textsize=tiny]{todonotes}

\icmltitlerunning{Automata-Conditioned Cooperative Multi-Agent Reinforcement Learning}

\begin{document}

\twocolumn[
  \icmltitle{Automata-Conditioned Cooperative Multi-Agent Reinforcement Learning}



  \icmlsetsymbol{equal}{*}

  \begin{icmlauthorlist}
    \icmlauthor{Beyazit Yalcinkaya}{x}
    \icmlauthor{Marcell Vazquez-Chanlatte}{y}
    \icmlauthor{Ameesh Shah}{x}
    \icmlauthor{Hanna Krasowski}{x}
    \icmlauthor{Sanjit A. Seshia}{x}
  \end{icmlauthorlist}

  \icmlaffiliation{x}{Department of Electrical Engineering and Computer Sciences, University of California, Berkeley, USA}
  \icmlaffiliation{y}{Independent Researcher, CA, USA}

  \icmlcorrespondingauthor{Beyazit Yalcinkaya}{beyazit@berkeley.edu}

  \icmlkeywords{Machine Learning, ICML}

  \vskip 0.3in
]



\printAffiliationsAndNotice{}  

\begin{abstract}
We study learning multi-task, multi-agent policies for cooperative, temporal objectives, under centralized training, decentralized execution.
In this setting, using \emph{automata} to represent tasks assigned to agents enables breaking down a team-level objective into simpler, smaller sub-tasks.
However, existing approaches remain sample-inefficient and are limited to the single-task case, requiring retraining policies for each new task.
In this work, we present \emph{Automata-Conditioned Cooperative Multi-Agent Reinforcement Learning} (ACC-MARL), a framework for learning task-conditioned, decentralized team policies.
We identify challenges to the feasibility of ACC-MARL, propose solutions, and prove that our approach is optimal.
We further show that learned value functions can be used to assign tasks optimally at test time.
Experiments demonstrate emergent task-aware, multi-step coordination among agents, such as pressing a button to unlock a door, holding the door, and short-circuiting tasks.
Our code is available at
\href{https://github.com/rad-dfa/acc-marl}{https://github.com/rad-dfa/acc-marl}.
\end{abstract}

\newcommand{\marcell}[1]{\textcolor{red}{#1}}
\newcommand{\ameesh}[1]{[\textcolor{purple}{Ameesh: {#1}}]}

\newtheorem{my_lemma}{Lemma}
\newtheorem{my_theorem}{Theorem}
\newtheorem{my_problem}{Problem}
\newtheorem{my_def}{Definition}

\newcommand{\dfa}{\mathcal{A}}
\newcommand{\dfastates}{Q}
\newcommand{\dfatokens}{\Sigma}
\newcommand{\dfatoken}{\sigma}
\newcommand{\dfatrans}{\delta}
\newcommand{\dfafinalstates}{F}
\newcommand{\dfastate}{q}
\newcommand{\dfaemptytoken}{\varepsilon}
\newcommand{\dfaword}{w}
\newcommand{\dfamin}{\operatorname{min}}

\newcommand{\mdp}{\mathcal{M}}
\newcommand{\mdpstates}{S}
\newcommand{\mdpstate}{s}
\newcommand{\mdpactions}{A}
\newcommand{\mdpaction}{a}
\newcommand{\mdptransprob}{P}
\newcommand{\mdpreward}{R}
\newcommand{\mdpinitdist}{\iota}
\newcommand{\mdpdisc}{\gamma}
\newcommand{\mdptrans}{T}
\newcommand{\trace}{\tau}

\newcommand{\dfaset}{D}
\newcommand{\dfaspace}{\mathcal{D}}

\newcommand{\labelf}{L}
\newcommand{\dfajointspace}{\mathbf{D}}
\newcommand{\dfajointstate}{\mathbf{A}}
\newcommand{\dfajointtoken}{\boldsymbol{\sigma}}
\newcommand{\ind}[1]{\mathds{1}\!\left\{#1\right\}}

\newcommand{\policy}{\pi}
\newcommand{\teampolicy}{\boldsymbol{\pi}}
\newcommand{\objective}{J}
\newcommand{\expect}{\mathbb{E}}
\newcommand{\prob}{\mathbb{P}}
\newcommand{\potential}{\Phi}
\newcommand{\dfaencoder}{\Psi}
\newcommand{\latentspace}{\mathcal{Z}}
\newcommand{\reals}{\mathbb{R}}
\newcommand{\perm}{\operatorname{perm}}
\newcommand{\argmax}{\operatorname*{\arg\max}}

\newcommand{\tokenenv}{\texttt{TokenEnv}\xspace}
\newcommand{\dfax}{\texttt{DFAx}\xspace}
\newcommand{\reach}{\texttt{Reach}\xspace}
\newcommand{\avoid}{\texttt{Avoid}\xspace}
\newcommand{\reachavoid}{\texttt{ReachAvoid}\xspace}
\newcommand{\reachavoidderived}{\texttt{ReachAvoidDerived}\xspace}
\newcommand{\rad}{\texttt{RAD}\xspace}

\newcommand{\buttonstwo}{\textbf{Buttons-2}\xspace}
\newcommand{\roomstwo}{\textbf{Rooms-2}\xspace}
\newcommand{\buttonsfour}{\textbf{Buttons-4}\xspace}
\newcommand{\roomsfour}{\textbf{Rooms-4}\xspace}

\section{Introduction}



A major challenge in \emph{Multi-Agent Reinforcement Learning} (MARL) is generalizing to a priori unknown tasks assigned at runtime, without retraining.
In this multi-task setting, achieving \emph{cooperative} team-level objectives becomes even more difficult when agents execute decentralized policies without explicit communication.
In such cases, each agent must reason about the tasks assigned to all agents to make optimal decisions toward completing both its own task and the team’s objectives.
In turn, the representation of tasks becomes crucial for generalization and optimal team behavior.
%


\emph{Formal specifications} have been proposed as a means for defining tasks for policies due to their well-defined operational semantics and concise encoding of long-horizon, temporally extended objectives~\cite{icarte2022reward, vaezipoor2021ltl2action, hasanbeig2020deep, jothimurugan2019composable, shah2024deep, yalcinkaya24compositional}.
In \emph{cooperative MARL}, their \emph{compositional} nature allows us to break down complex tasks into simpler, smaller ones while satisfying the overall task~\cite{neary2020reward,smith2023automatic,shah2025learning}.
However, existing methods are sample inefficient and confined to a single fixed~task.
In this work, we present \emph{Automata-Conditioned Cooperative MARL} (ACC-MARL), a framework for learning multi-task, multi-agent policies where objectives are represented as \emph{Deterministic Finite Automata} (DFAs).
Our approach leverages recent work on \emph{automata embeddings} for goal-conditioned RL~\cite{yalcinkaya24compositional,yalcinkaya2025provably} to provide meaningful task representations for conditioning decentralized MARL policies, enabling efficient transfer of knowledge across semantically similar tasks.
We motivate ACC-MARL and the challenges involved in its feasibility in the following.

\begin{figure}
    \centering
    \includegraphics[width=0.97\linewidth]{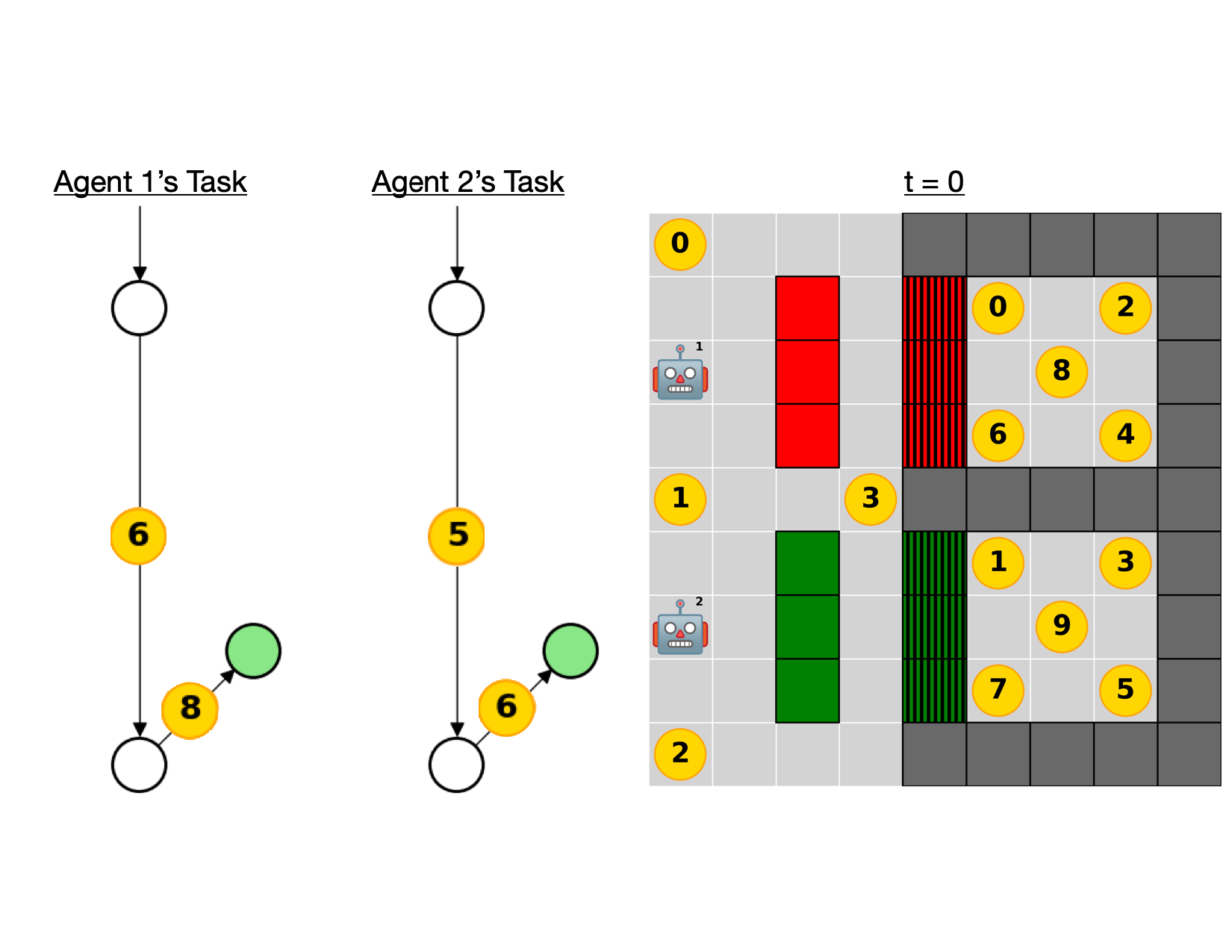}
    \caption{Motivating example \buttonstwo---details are below.}
    \label{fig:first_example}
\end{figure}

\textbf{Motivating Example.} Consider the game in \Cref{fig:first_example} with two agents and many tokens scattered across two rooms. To open the doors of a room---shown with striped colored cells, an agent needs to stand on a corresponding button---shown with the same colored cells.
At the beginning of the game, tasks that involve visiting these tokens are assigned to the agents. For example, ``reach token 6 and then 8'' and ``reach token 5 and then 6'' could be assigned to agents 1 and 2, respectively. \emph{The goal is to complete all tasks successfully.}
Due to the environment dynamics, agents must cooperate and make coordinated moves to enter and exit the rooms.
Moreover, each agent needs to know the other's task along with its own to decide \emph{whether} and \emph{how} to help one another.

We identify three main challenges to ACC-MARL in this setting.
First, conditioning on temporal tasks requires learning history-dependent policies so that agents can track their progress, e.g., in \Cref{fig:first_example}, agents need to infer previously-reached tokens.
This can lead to sample inefficiency and result in suboptimal policies, as our ablation study in \Cref{sec:ablstudy} shows.
Second, the game objective provides a sparse reward signal, i.e., ``did the team complete all tasks, or not?''
For instance, agents in \Cref{fig:first_example} need to reach four tokens in the correct order before getting non-zero rewards.
This results in a \textit{credit assignment problem}~\cite{agogino2004unifying}, which makes it difficult for agents to discern how their individual behaviors contribute to the overall reward.
Finally, generalizing to a large class of temporal tasks requires learning semantically meaningful latent task representations, e.g., we want agents in \Cref{fig:first_example} to handle various tasks assigned at runtime and transfer their skills across different task classes.
However, learning latent representations for assigned DFAs while simultaneously learning policies conditioned on these representations can be a performance bottleneck, as we empirically show in \Cref{sec:ablstudy}.

\begin{figure*}
    \centering
    \includegraphics[width=\linewidth]{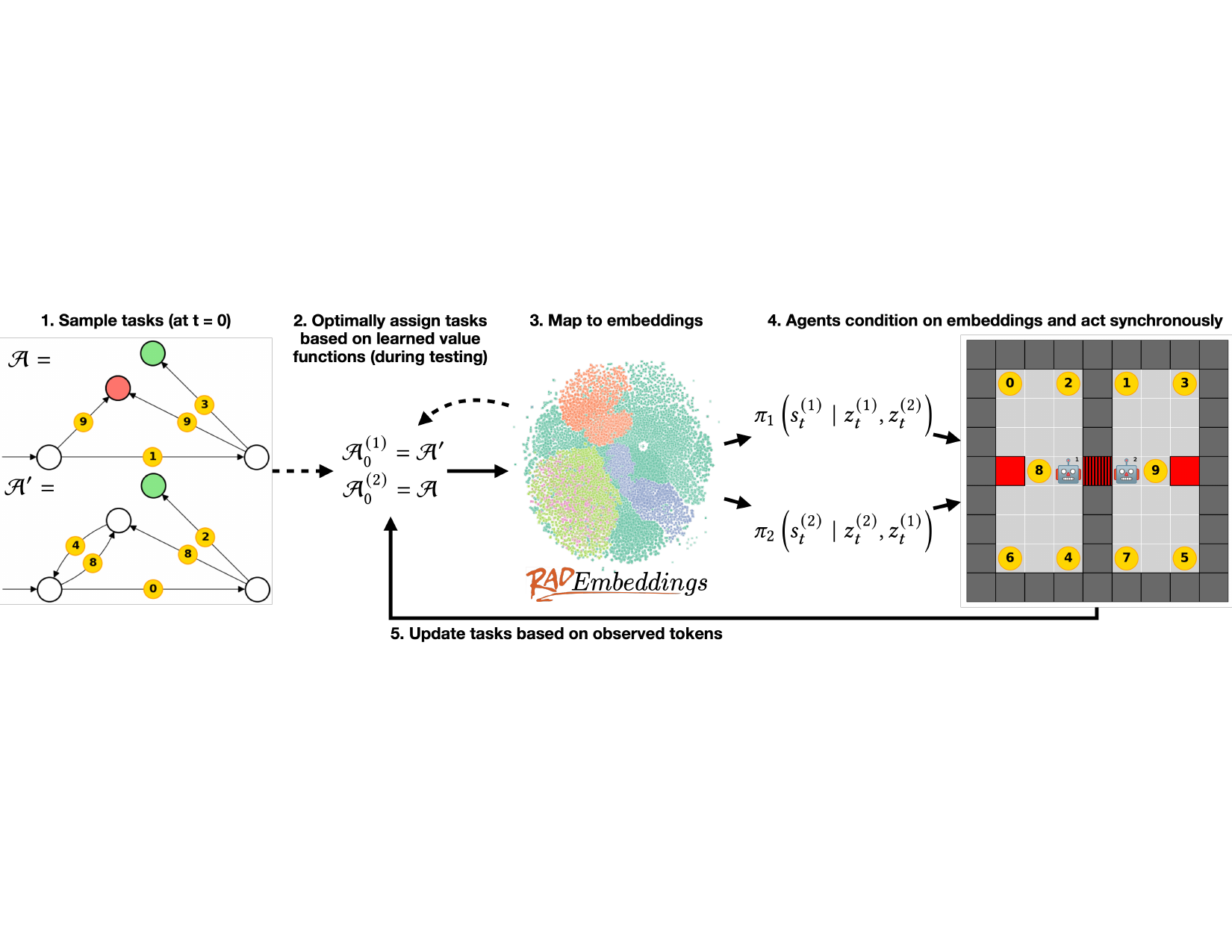}
    \caption{An overview of ACC-MARL with \roomstwo environment given on the right. During training, we sample tasks from a prior and randomly assign them. At each step, tasks are mapped to RAD Embeddings and passed to decentralized policies. Each agent conditions on these embeddings and synchronously predicts its action. At test time, we use learned value functions to assign tasks optimally.}
    \label{fig:splash}
\end{figure*}


This work introduces ACC-MARL and presents a novel approach addressing each of these challenges.
For history dependency, we utilize operational semantics of DFAs and, at each step, augment agents' observations with the latest DFA representations of given tasks to provide each agent with information about how much progress has been made toward the assigned tasks.
We employ potential-based reward shaping~\cite{ng1999policy,devlin2011theoretical} to address the credit assignment issue, rewarding each agent for completing its own sub-task while still learning optimal team behaviors.
To overcome the representation bottleneck, we use RAD Embeddings~\cite{yalcinkaya24compositional,yalcinkaya2025provably}, pretrained DFA embeddings that distinguish distinct tasks and encode semantic similarities across a large class of temporal tasks.
Furthermore, we show that, after training, the learned value functions can be utilized to find optimal task assignments to agents at test time.
Finally, we present an empirical evaluation demonstrating the efficacy of our approach.
%
%
%
%
See \Cref{fig:splash} for a high-level overview.


\textbf{Contributions.}
We summarize our contributions as follows:
(i) we introduce Automata-Conditioned Cooperative MARL (ACC-MARL), a framework for learning multi-task, multi-agent, ego-centric policies; we address challenges to its feasibility and prove that our approach is optimal;
(ii) we show that the value functions of learned ACC-MARL policies can be used for assigning tasks optimally;
(iii) we implement a toolchain in JAX~\cite{jax2018github}, which enables parallelized operations on DFAs and can be used as a Python package to work with DFA tasks and to learn DFA-conditioned policies in other applications;
(iv) we present an empirical evaluation of our framework, demonstrating its efficacy and qualitatively analyzing learned agent behaviors.

\section{Preliminaries}\label{sec:prelim}

We assume the (unknown) environment is a \emph{Markov game}.

\begin{definition}[Markov Game]\label{def:env}
A \textbf{Markov game} with $n$ agents is a tuple
$
\mdp = \langle \mdpstates, \mdpactions, \mdptransprob, \mdpinitdist \rangle,
$
where
$\mdpstates = \mdpstates_1 \times \dots \times \mdpstates_{n}$ is the joint set of states,
$\mdpactions = \mdpactions_1 \times \dots \times \mdpactions_{n}$ is the joint set of actions,
$\mdptransprob : \mdpstates \times \mdpactions \to \Delta(\mdpstates)$ is the transition probability function, and
$\mdpinitdist \in \Delta(\mdpstates)$ is the initial state distribution.
We assume that each $\mdpstates_i$ encodes the global state of the game, but from agent $i$'s point of view.
$\trace \in \mdpstates^*$ is called a \textbf{trace}, and we use $\trace_i \in \mdpstates_i^*$ to indicate a trace of agent $i$.
Episodes are assumed to be finite, and $T$ denotes the finite horizon.
\end{definition}

Here, a reward specific to the Markov game is not specified.
Instead, tasks are assigned at runtime in the form of \emph{Deterministic Finite Automata} (DFAs), and rewards are given based on whether all tasks are completed, e.g., \Cref{fig:first_example,fig:splash}.
Therefore, we continue with our task model next.

%
%

\begin{definition}[Deterministic Finite Automaton]\label{def:dfa}
    A \textbf{Deterministic Finite Automaton} (DFA) is a tuple $\dfa = \langle \dfastates, \dfatokens, \dfatrans, \dfastate_0, \dfafinalstates \rangle$, where
    $\dfastates$ is the finite set of states,
    $\dfatokens$ is the finite alphabet,
    $\dfatrans: \dfastates \times \dfatokens \to \dfastates$ is the transition function, where $\dfatrans(\dfastate, \dfatoken) = \dfastate'$ denotes a transition to $\dfastate' \in \dfastates$ from $\dfastate \in \dfastates$ on $\dfatoken \in \dfatokens$,
    $\dfastate_0 \in \dfastates$ is the initial state, and
    $\dfafinalstates \subseteq \dfastates$ is the set of final states.
    The semantics of a DFA is defined by its set of final states $\dfafinalstates$ and its extended (lifted) transition function $\dfatrans^*: \dfastates \times \dfatokens^* \to \dfastates$,~where
    \[
        \dfatrans^*(\dfastate, \dfaemptytoken) \triangleq \dfastate \;\text{ and }\;
        \dfatrans^*(\dfastate, \dfatoken \dfaword) \triangleq \dfatrans^*(\dfatrans(\dfastate, \dfatoken), \dfaword).
    \]
    If $\dfatrans^*(\dfastate_0, \dfaword) \in \dfafinalstates$, then $\dfa$ \textbf{accepts} $\dfaword$, i.e., $w \models \dfa$.
    If $\dfatrans^*(\dfastate_0, \dfaword) \not\in \dfafinalstates$, then $\dfa$ \textbf{rejects} $\dfaword$, i.e., $w \not\models \dfa$.
    We assume that once a DFA accepts a prefix, a suffix cannot change the acceptance decision, i.e., $\forall \dfastate \in \dfafinalstates , \forall \dfatoken \in \dfatokens , \dfatrans(\dfastate, \dfatoken) = \dfastate$.
\end{definition}

DFAs can be reduced to a canonical form (up to an isomorphism) through minimization \cite{hopcroft1971n}, denoted by $\dfamin(\dfa)$.
We write $\dfa_\top$ for the single-state accepting DFA.
The \emph{progression} of
$\dfa$ by a word $w \in \dfatokens^*$ is defined~as:
\[
\dfa / w
\triangleq 
\dfamin
\left(
\langle
\dfastates,
\dfatokens,
\dfatrans,
\dfatrans^*(\dfastate_0, w),
\dfafinalstates
\rangle
\right),
\]
i.e., read the word and minimize the DFA.
Given a set of DFAs $\dfaset$, we write $\dfajointstate \in \dfaset^n$ to denote an $n$-tuple of DFAs.

We relate traces of the environment to DFAs through a \emph{labeling function}, mapping states to alphabet symbols.
Given a trace $\trace = \mdpstate_0, \dots, \mdpstate_k \in S^*$ and a labeling function $\labelf: S \to \dfatokens$, we write $\trace \models_{\labelf} \dfa$ to denote $\labelf(\mdpstate_0), \dots, \labelf(\mdpstate_k) \models \dfa$.




\section{ACC-MARL}\label{sec:dfa-marl}

This section presents our theoretical framework---see \Cref{fig:splash} for an overview.
First, we state the ACC-MARL problem and discuss the main challenges to its feasibility.

\subsection{Problem Statement}\label{subsec:problem_statement}

We start with formally stating the ACC-MARL problem.

\begin{problem}[Automata-Conditioned Cooperative Multi-Agent Reinforcement Learning (ACC-MARL)]\label{prob:dfa-marl}
Given a Markov game $\mdp = \langle \mdpstates, \mdpactions, \mdptransprob, \mdpinitdist \rangle$ with $n$ agents, a finite set of DFAs $\dfaset$ over some shared alphabet $\dfatokens$ with a prior distribution $\mdpinitdist_\dfaset \in \Delta(\dfaset)$, and a labeling function $\labelf_i: S_i \to \dfatokens$ for $i \in [n]$,
a decentralized policy for agent $i$~employs
\[
\policy_i : \mdpstates_i^* \times \dfaset^n \to \Delta(\mdpactions_i),
\]
where $\mdpstates_i^*$ is the set of traces of agent $i$. The joint policy is 
\[
\teampolicy(\trace, \dfajointstate)=
\left[
\policy_1(\trace_1, \dfajointstate), \dots, \policy_n(\trace_n, \dfajointstate)
\right]
.
\]
The ACC-MARL problem is to find a joint policy $\teampolicy$ maximizing the probability of satisfying the conjunction of all DFAs in $\dfajointstate \sim \mdpinitdist_\dfaset^n$ by navigating the underlying game $\mdp$, i.e.,
\begin{align}\label{eqn:obj_vanilla}
\objective(\teampolicy)
=
\prob_{
    \substack{
        \dfajointstate \sim \mdpinitdist_\dfaset^n \\
        \trace \sim \mdp, \teampolicy, \dfajointstate
    }
}
\left[
    \bigwedge_{i=1}^{n}
    \trace_i \models_{\labelf_i} 
    \dfajointstate[i]
\right],
\end{align}
where $\trace$ is generated by conditioning $\teampolicy$ on $\dfajointstate$ and running it in $\mdp$.
The objective is to solve
$
\teampolicy^* \in \arg\max_{\teampolicy} \objective(\teampolicy),
$
i.e., maximize the probability of satisfying all DFAs in $\dfajointstate \sim \mdpinitdist_{\dfaset}$.
\end{problem}

Our goal is to solve \Cref{prob:dfa-marl} in the centralized training, decentralized execution setting, where each agent observes the global state from its own point of view, along with all assigned DFAs, and locally predicts its next action.
There are three main challenges to make this problem feasible.
\begin{enumerate}[nosep, noitemsep, leftmargin=*]
    \item \textbf{History Dependency.}
    Policies need to take the generated trace (history) to decide the current state, i.e., task progress, of each DFA in $\dfajointstate \in \dfaset^n$.
    Thus, the augmented game given in \Cref{prob:dfa-marl} is not a Markov game.
    In practice, this history dependency can result in suboptimal policies, as we empirically show in \Cref{sec:ablstudy}.
    \item \textbf{Credit Assignment.} The objective given in \Cref{prob:dfa-marl} defines a sparse reward solely based on the successful completion of the overall task, i.e., whether the team completed all assigned tasks or not.
    On the other hand, naively rewarding based on the individual tasks assigned to agents prevents cooperation.
    This makes it hard for agents to understand the impact of their own behaviors on the overall task objective, commonly known as the \emph{credit assignment problem}~\cite{agogino2004unifying}.
    \item \textbf{Representation Bottleneck.} Conditioning on DFAs couples control and representation learning, as policies need to learn latent DFA representations while simultaneously conditioning on them for control.
    This poses a challenge for scalability and generalization, as demonstrated in the single-agent~setting~\cite{yalcinkaya24compositional,yalcinkaya2025provably} and as \Cref{sec:ablstudy} shows in the multi-agent case.
\end{enumerate}
In the following, we present our approach for addressing these challenges and prove that it solves \Cref{prob:dfa-marl}.

\subsection{Addressing History Dependency}\label{sec:history}

We start with history dependency.
First, recall that the formulation given in \Cref{prob:dfa-marl} conditions policies on the DFAs assigned at the beginning of the episode and therefore requires agents to rely on the generated trace (history) to infer task progress.
Second, observe that as we take transitions towards the accepting state of a DFA, the task changes, e.g., when agent 1 reaches token $6$ in \Cref{fig:first_example}, its DFA task becomes a smaller DFA, i.e., the DFA that says ``reach token 8.''
We use this observation, along with the full-observability assumption for the underlying environment defined in \Cref{def:env}, to mitigate history dependency by updating agents' DFAs using given labeling functions and augmenting the state with the latest minimal DFAs.


Given a finite set of DFAs $\dfaset$ over some $\dfatokens$ as in \Cref{prob:dfa-marl}, define its corresponding \emph{DFA space} $\dfaspace \supseteq \dfaset$ as follows:
\begin{align}\label{eqn:dfaspace}
    \dfaspace \triangleq \{ \dfa \mid
    \exists \dfa' \in \dfaset,
    \exists \dfaword \in \dfatokens^*,
    \dfa = \dfa'/\dfaword
    \},
\end{align}
i.e., $\dfaspace$ contains all minimized sub-DFAs of $\dfaset$.
Note that a similar notion has been introduced in the single-agent case \cite{yalcinkaya2025provably}.
Here, we extend it to the multi-agent setting.
In \Cref{prob:dfa-marl}, tasks from $\dfaset$ are assigned to $n$ agents; therefore, the product $\dfaset^n$ has all such initial task assignments.
Since $\dfaspace$ contains all minimized sub-DFAs of $\dfaset$, $\dfaspace^n$ contains all possible minimal sub-DFAs agents can see throughout an episode.
Therefore, we can use this product space to expose the notion of task progress to agents.

A product DFA space $\dfaspace^n$ induces a deterministic MDP
\[
\mdp_{\dfaspace^n} = \langle \dfaspace^n, \dfatokens^n, \mdptrans_{\dfaspace^n}, \mdpreward_{\dfaspace^n}, \mdpinitdist_{\dfaset}^n \rangle,
\]
where
\begin{itemize}[nosep, noitemsep, leftmargin=*]
    \item $\dfaspace^n$, the product DFA space, is the set of states,
    \item $\dfatokens^n$, the product alphabet, is the set of actions,
    \item $\mdptrans_{\dfaspace^n}: \dfaspace^n \times \dfatokens^n \to \dfaspace^n$ is the transition function s.t.
    \begin{align}\label{eqn:dfaspace_tran}
        \mdptrans_{\dfaspace^n}
        \left(
        \dfajointstate,
        \dfajointtoken
        \right)
        =
        \dfajointstate
        /
        \dfajointtoken,
    \end{align}
    where $\dfajointstate/\dfajointtoken = \left[ \dfajointstate[i]/\dfajointtoken[i] \right]_{i \in [n]}$ is element-wise progress,
    \item $\mdpreward_{\dfaspace^n} : \dfaspace^n \times \dfatokens^n \to \{0, 1\}$ is the reward function s.t.
    \begin{align}\label{eqn:dfaspace_rew}
        \mdpreward_{\dfaspace^n}
        (
        \dfajointstate
        ,
        \dfajointtoken
        ) = \begin{cases}
            1 &\text{if } \mdptrans_{\dfaspace^n}(\dfajointstate, \dfajointtoken) = \dfajointstate_\top\\
            0 &\text{otherwise,}
        \end{cases}
    \end{align}
    where $\dfajointstate_\top = \left[ \dfa_\top \right]_{i \in [n]}$ is a vector of $\dfa_\top$, and
    \item $\mdpinitdist_{\dfaset}^n \in \Delta(\dfaset^n)$ is the prior distribution from \Cref{prob:dfa-marl}.
\end{itemize}

We take the cascade composition of the underlying Markov game $\mdp$ and $\mdp_{\dfaspace^n}$ and play the game in the product space $\mdpstates \times \dfaspace^n$.
In this new game, from state $(\mdpstate_t, \dfajointstate_t)$, given an action $\mdpaction_t$, we (i) step $\mdp$ first, (ii) then label the resulting state using the element-wise labeling function $\labelf(\mdpstate_{t+1}) = \left[\labelf_i\left(\mdpstate_{t+1}\right)\right]_{i \in [n]}$, (iii) step $\mdp_{\dfaspace^n}$ using these labels, (iv) give reward based on $\mdpreward_{\dfaspace^n}$, and (v) finally, return the next state $(\mdpstate_{t + 1}, \dfajointstate_{t + 1})$.
We denote this game with $\mdp \mid_\labelf \mdp_{\dfaspace^n}$.

Note that this new formulation of the game introduces a negligible computational overhead compared to the original one.
The DFA space formulation given in \Cref{eqn:dfaspace} is never explicitly computed or stored in memory; instead, DFA spaces are implicitly defined by generating distributions---see Appendix~\ref{appendix:dfa-dists}.
At each step, we progress the DFAs as described in \Cref{eqn:dfaspace_tran}, which involves updating the current DFA states using the observed symbols and then minimizing the resulting DFAs.
The former step has $\mathcal{O}(1)$ complexity, i.e., read the next state from the transition table.
For a DFA with $n$ states, the computational cost of DFA minimization is $\mathcal{O}(n\log n)$ when one uses Hopcroft’s algorithm~\cite{hopcroft1971n}, which is the best-known bound.

$\mdp \mid_\labelf \mdp_{\dfaspace^n}$ is a Markov game as the policies have the latest minimal DFAs and consequently do not need the generated trace as input.
Therefore, in this game, each agent $i$ employs
\[
\policy_i' : \mdpstates_i \times \dfaspace^n \to \Delta(\mdpactions_i)
\]
and the joint policy $\teampolicy'$ is obtained by synchronously running local policies.
We use the following to solve $\mdp \mid_\labelf \mdp_{\dfaspace^n}$:
\begin{align}\label{eqn:obj1}
\objective'_{\mdpdisc}(\teampolicy')
=
\expect_{
\substack{
\mdpstate_0 \sim \mdpinitdist\\
\dfajointstate_0 \sim \mdpinitdist_{\dfaset}^n
}
}
\left[
\sum_{t=0}^
{
T
}
\gamma^t \, \mdpreward_{\dfaspace^n}(\dfajointstate_t, \labelf(s_{t+1}))
\right],
\end{align}
where $\mdpstate_{t+1}\sim\mdptransprob(\mdpstate_t, \mdpaction_t)$,
$\mdpaction_t \sim \teampolicy'(\mdpstate_t, \dfajointstate_t)$,
$\mdpdisc \in [0,1]$ is a discount factor, and $T$ denotes the finite horizon of the game.
This reformulation explicitly tracks task progress by augmenting the state with the latest minimal DFAs, thereby making the game Markovian and mitigating history dependency.
Next, we show that a policy maximizing \Cref{eqn:obj1} is optimal with respect to \Cref{prob:dfa-marl} as $\mdpdisc \to 1^-$.

\begin{restatable}{thm}{markov}
\begin{theorem}\label{thm:markov}
    Maximizing $\objective'_{\mdpdisc}(\teampolicy')$ solves \Cref{prob:dfa-marl}, i.e.,
    \[
    \lim_{\mdpdisc \to 1^-}
    \max_{\teampolicy'} \objective'_{\mdpdisc}(\teampolicy')
    =
    \max_{\teampolicy}
    \objective(\teampolicy),
    \]
    where $\objective(\teampolicy)$ and $\objective'_{\mdpdisc}(\teampolicy')$ are in \Cref{eqn:obj_vanilla,eqn:obj1},~resp.
\end{theorem}
\end{restatable}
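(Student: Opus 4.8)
\emph{Proof plan.}
The plan is to show the two objectives share an optimal value by (i) collapsing the discounted return of $\mdp \mid_\labelf \mdp_{\dfaspace^n}$ to the probability of a single hitting event, (ii) identifying that event with satisfaction of all DFAs, and (iii) arguing that the augmented state is a sufficient statistic, so the two decentralized policy classes attain the same satisfaction probability. First I would fix $\teampolicy'$ and inspect the return. By \Cref{eqn:dfaspace_rew} the reward equals $1$ only on the transition entering $\dfajointstate_\top$, and since $\dfajointstate_\top \in \dfaspace^n_T$ is terminal this occurs at most once per episode. Hence
\[
\objective'_{\mdpdisc}(\teampolicy') = \expect\!\left[\mdpdisc^{\,T}\,\ind{\text{reach } \dfajointstate_\top}\right],
\]
where $T$ is the step at which $\dfajointstate_\top$ is entered, finite whenever the event occurs because episodes are finite.

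Next I would let $\mdpdisc \to 1^-$. For fixed $\teampolicy'$ the map $\mdpdisc \mapsto \objective'_{\mdpdisc}(\teampolicy')$ is non-decreasing and bounded by $1$, and $\mdpdisc^{\,T}\to 1$ on the hitting event, so dominated convergence gives $\lim_{\mdpdisc\to 1^-}\objective'_{\mdpdisc}(\teampolicy') = \prob_{\teampolicy'}[\text{reach}]$, abbreviating the event $\{\text{reach }\dfajointstate_\top\}$. Because this family is monotone in $\mdpdisc$, the limit commutes with the supremum over policies, so $\lim_{\mdpdisc\to 1^-}\max_{\teampolicy'}\objective'_{\mdpdisc}(\teampolicy') = \max_{\teampolicy'}\prob_{\teampolicy'}[\text{reach}]$.

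I would then relate the hitting event to DFA satisfaction. A short induction on $t$ using \Cref{eqn:dfaspace_tran} shows the DFA component is the element-wise progression $\dfajointstate_t = \dfajointstate_0 / \big(\labelf(\mdpstate_1)\cdots\labelf(\mdpstate_t)\big)$, so reaching $\dfajointstate_\top$ means every component's label word progresses its DFA to $\dfa_\top$. Since all DFAs are plan DFAs, an accepting state is a sink and its residual language is $\dfatokens^*$, giving $\dfa/\dfaword = \dfa_\top \iff \dfaword \models \dfa$; early termination at a rejecting component is harmless because such a task is unsatisfiable thereafter. Thus reaching $\dfajointstate_\top$ is exactly $\bigwedge_i \trace_i \models_{\labelf_i}\dfajointstate_0[i]$, and running any policy in either game induces the same trace law, so $\prob_{\teampolicy'}[\text{reach}] = \objective(\teampolicy)$ for the induced original-game policy $\teampolicy$. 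For the two maxima, a product policy lifts to a decentralized history policy that recomputes $\dfajointstate_t$ from each agent's own trace—possible because each $\mdpstates_i$ records the global state (\Cref{def:env})—proving $\max_{\teampolicy'}\prob_{\teampolicy'}[\text{reach}] \le \max_\teampolicy\objective(\teampolicy)$; conversely, any decentralized history policy is itself a policy of the Markov game $\mdp \mid_\labelf \mdp_{\dfaspace^n}$, so its reachability probability is at most the optimal value $V^*$, which standard MDP theory attains with a deterministic stationary policy on $(\mdpstate,\dfajointstate)$ that full observability lets us decompose into decentralized product policies, proving the reverse inequality. Together with the limit above this yields the stated equality.

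I expect the equality of the two optima to be the main obstacle. The collapse of the return and the $\mdpdisc \to 1^-$ limit are routine, and the plan-DFA identity is clean, but justifying that conditioning on the \emph{current} augmented state $(\mdpstate,\dfajointstate)$ loses nothing relative to the full history is delicate: it genuinely relies on the full-observability assumption of \Cref{def:env} to turn the cooperative team into a fully observed MDP whose optimal deterministic Markov policy decentralizes. I would also take care with two edge cases—the differing termination conditions of the two games and the indexing convention for the initial label $\labelf(\mdpstate_0)$—to ensure the event correspondence is exact rather than off by one.
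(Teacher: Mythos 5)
Your proposal is correct, and its core---the progression identity $\dfajointstate_t = \dfajointstate_0/\labelf(\mdpstate_0)\cdots\labelf(\mdpstate_t)$ together with the plan-DFA fact that hitting $\dfajointstate_\top$ is equivalent to $\bigwedge_{i}\trace_i \models_{\labelf_i}\dfajointstate_0[i]$---is exactly the reward-equivalence at the heart of the paper's proof. Where you genuinely diverge is in how the equality of optima is then established. The paper defines the mapping $(\mdpstate_0,\dots,\mdpstate_{t+1},\dfajointstate_0)\mapsto(\mdpstate_{t+1},\dfajointstate_t)$, checks that rewards and transition probabilities agree under it, and from this bisimulation-style correspondence directly concludes that the optimal values coincide as $\mdpdisc\to 1^-$; both the $\mdpdisc$-limit and the passage between the two policy classes are left implicit. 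You instead collapse the discounted return to a hitting probability $\expect\left[\mdpdisc^{T}\ind{\text{reach }\dfajointstate_\top}\right]$, justify interchanging $\lim_{\mdpdisc\to 1^-}$ with $\max_{\teampolicy'}$ via monotonicity of the family in $\mdpdisc$, and prove the two inequalities between the optima separately: lifting decentralized Markov policies to decentralized history policies (using full observability so each agent can re-derive $\dfajointstate_t$), and conversely bounding any history policy by the optimal value of the Markov game, attained by a deterministic stationary policy that full observability lets you decentralize. This buys rigor precisely where the paper is terse---the limit--max interchange, and the claim that restricting to decentralized Markov policies on $(\mdpstate,\dfajointstate)$ loses nothing relative to decentralized history policies---and it also forces explicit attention to the two edge cases you flag (the differing termination conditions and the indexing of $\labelf(\mdpstate_0)$), which the paper's proof glosses over; the cost is a longer argument than the paper's compact state-correspondence presentation.
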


The proof is given in Appendix~\ref{proof:thm:markov}. \Cref{thm:markov} states that our Markovian reformulation of the non-Markovian game given in \Cref{prob:dfa-marl} has the same optimal policy.
Therefore, we can use this reformulation to solve \Cref{prob:dfa-marl}, addressing history dependency.
However, as we show in our ablation study in \Cref{sec:ablstudy}, using the Markovian formulation alone is not enough to learn optimal policies.
The reward defined in \Cref{eqn:dfaspace_rew} is still sparse, returning non-zero rewards only when all agents complete their tasks.
Therefore, in the following, we present our approach for shaping the reward while preserving the optimality of learned policies.


\subsection{Addressing Credit Assignment}\label{sec:credit}

Our goal is to shape the sparse reward given in \Cref{eqn:dfaspace_rew}, returning non-zero values only when all DFAs are satisfied, while still guaranteeing optimality with respect to \Cref{prob:dfa-marl}.
To this end, we apply \emph{potential-based reward shaping}~\cite{ng1999policy,devlin2011theoretical} by defining the potential function of each agent as the successful completion of its DFA.
Formally, for an agent $i$, we define:
\[
    \potential_i(\dfajointstate)
    =
    \begin{cases}
        1 \text{ if } \dfajointstate[i] = \dfa_\top\\
        0 \text{ otherwise.}
    \end{cases}
\]
We then use $\potential_i$ to shape the reward of agent $i$ as follows:
\begin{align}\label{eqn:new_rew}
    \mdpreward^{(i)}_{\text{PBRS}}
    (
    \dfajointstate
    ,
    \dfajointtoken
    )
    &
    =
    \mdpreward_{\dfaspace^n}
    (
    \dfajointstate
    ,
    \dfajointtoken
    )
    +
    F
    (
    \dfajointstate
    ,
    \dfajointtoken
    )
    \\
    F
    (
    \dfajointstate
    ,
    \dfajointtoken
    )
    &
    =
    \mdpdisc
    \potential_i
    \left(
    \mdptrans_{\dfaspace^n}
    \left(
    \dfajointstate,
    \dfajointtoken
    \right)
    \right)
    -
    \potential_i
    \left(
    \dfajointstate
    \right),\notag
\end{align}
where $\mdpdisc \in [0, 1]$ is the discount factor from \Cref{eqn:obj1}.
Observe that agents still receive the same reward as \Cref{eqn:dfaspace_rew} when they complete all DFAs, but they also get a positive reward when they complete their own DFA tasks.
We use the shaped reward $\mdpreward^{(i)}_{\text{PBRS}}$ in the objective given by \Cref{eqn:obj1} to train each policy $\policy_i'$.
Maximizing this objective preserves optimality with respect to \Cref{prob:dfa-marl}, as the shaped reward is based on a state potential function, a result proved in~\cite{devlin2011theoretical}.
%
By shaping the reward according to the successful completion of each agent’s assigned DFA task, we provide denser feedback on how their behaviors contribute to the overall task, helping agents identify their roles and therefore addressing the credit assignment problem.
As our ablation study in \Cref{sec:ablstudy} shows, shaping the reward is crucial to learning optimal policies.
However, in the same study, we also show that simultaneously learning latent DFA representations during training can be a performance bottleneck for teams with more than two agents.
Therefore, we provide a solution to the representation bottleneck problem next.


\subsection{Addressing Representation Bottleneck}\label{sec:repbot}
Our ablation study in \Cref{sec:ablstudy} shows that in four-agent games, given in \Cref{fig:4buttons_4agents,fig:4rooms_4agents} in the Appendix, learning latent DFA representations during training can result in sub-optimal policies.
In the single-agent case, decoupling representation and control learning has been shown to improve sample efficiency due to the large DFA classes considered~\cite{yalcinkaya24compositional}.
Here, we extend this idea to MARL and represent DFAs using RAD Embeddings~\cite{yalcinkaya24compositional,yalcinkaya2025provably}, \emph{provably correct pretrained DFA embeddings}.
These latent representations enable skill transfer for downstream policies by encoding similarities across a large class of DFAs.
They also uniquely represent distinct tasks, which is a crucial property used in the following.

Let $\dfaencoder: \dfaspace \to \latentspace$ denote a pretrained encoder mapping DFAs in $\dfaspace$ to RAD Embeddings, i.e., latent DFA representations in $\latentspace$ as described in \cite{yalcinkaya2025provably}---see Appendix~\ref{appendix:radembed} for more details on these embeddings.
The encoder $\dfaencoder$ guarantees that distinct DFAs are uniquely represented in the latent space.
Formally, for all $\dfa, \dfa' \in \dfaspace$,
\begin{align}\label{eqn:encoder}
    \dfamin(\dfa) = \dfamin(\dfa')
    \iff
    \dfaencoder(\dfa) = \dfaencoder(\dfa'),
\end{align}
i.e., two DFAs have the same embedding if and only if they are the same when minimized.
As minimized DFAs are canonical task representations, if two DFAs are equal when minimized, then they represent the same task.
Therefore, we can expose the product latent space $\latentspace^n$ to policies, instead of the product DFA space $\dfaspace^n$.
Then, each agent $i$ employs
\[
\policy_i'' : \mdpstates_i \times \latentspace^n \to \Delta(\mdpactions_i)
\]
and the joint policy $\teampolicy''$ is the synchronous composition of these local policies.
We use the objective defined in \Cref{eqn:obj1} with the shaped reward given in \Cref{eqn:new_rew} to learn $\teampolicy''$.
Note that $\dfaencoder$ maps two DFAs to the same embedding if and only if they represent the same task, as stated in \Cref{eqn:encoder}.
Therefore, the Markovian reformulation of \Cref{prob:dfa-marl} given in \Cref{sec:history}, which is over the product DFA space $\dfaspace^n$, can be reformulated as one over the product latent DFA space $\latentspace^n$, with equivalent rewards and transition probabilities.
Thus, the objective set for $\teampolicy''$ is optimal with respect to \Cref{prob:dfa-marl}, which addresses the representation bottleneck problem while preserving optimality.
As \Cref{sec:ablstudy} demonstrates empirically, RAD Embeddings enable optimal policy learning in teams with more than two agents.

This concludes the details of our approach to make \Cref{prob:dfa-marl} feasible.
Next, we show that learned value functions can be used for assigning tasks optimally at test time.


\subsection{Optimal Task Assignment}\label{sec:assign}
So far, we have assumed that the tasks in $\dfajointstate \in \dfaset^n$ are assigned to agents, i.e., $\dfajointstate[i]$ is the DFA of agent $i$.
Now we show that solving \Cref{prob:dfa-marl} provides a means for optimally assigning tasks if agents are allowed to share the outputs of their value functions at the episode start.
Specifically, the learned value functions order task assignments with respect to the given objective and therefore can be used for finding optimal task assignments.

Let $V_i : \mdpstates_i^* \times \dfaset^n \to \reals$ denote the optimal value function of the decentralized policy $\policy_i: \mdpstates_i^* \times \dfaset^n$ maximizing $\objective(\teampolicy)$ as in \Cref{prob:dfa-marl}, and let $\dfajointstate \in \dfaset^n$ be a DFA task assignment.
Define $V: \dfaset^n \to \reals$ as:
\[
    V(\dfajointstate)
    \triangleq
    \expect_{
        \trace \sim \mdp, \teampolicy, \dfajointstate
    }
    \left[
    \sum_{i = 1}^{n}
    V_i(\trace_i, \dfajointstate)
    \right].
\]
We can use this as a proxy value function over task assignments and enumerate the Pareto frontier, i.e.,
\[
    \dfajointstate^{\star} \in 
    \argmax_{\dfajointstate' \in \perm(\dfajointstate)}
    V(\dfajointstate'),
\]
where $\perm(\dfajointstate)$ denotes all permutations of $\dfajointstate$.
$\dfajointstate^{\star}$ is a Pareto optimal task assignment, as no agent's expected performance can be improved without degrading the overall performance, i.e., $V(\dfajointstate^{\star}) \geq V(\dfajointstate')$
for all $\dfajointstate' \in \perm(\dfajointstate)$.
This ensures that we select from the set of non-dominated assignments, enabling optimal assignment of tasks to agents at test time.
See \Cref{fig:splash,fig:4rooms_4agents} for examples, where optimal task assignments allow agents to leverage their asymmetric conditions to improve team performance.
Note that searching over all possible task assignments, i.e., $\perm(\dfajointstate)$, could be intractable for large teams; however, teams of appropriate size could still benefit.
Indeed, in \Cref{sec:test}, we show that our task assignment method improves team performance and yields higher empirical success probabilities.
Observe that this result applies to the policies presented in \Cref{sec:history,sec:credit,sec:repbot}, as all are optimal with respect to \Cref{prob:dfa-marl}.



\begin{figure*}[t]
    \centering
    \includegraphics[width=\linewidth]{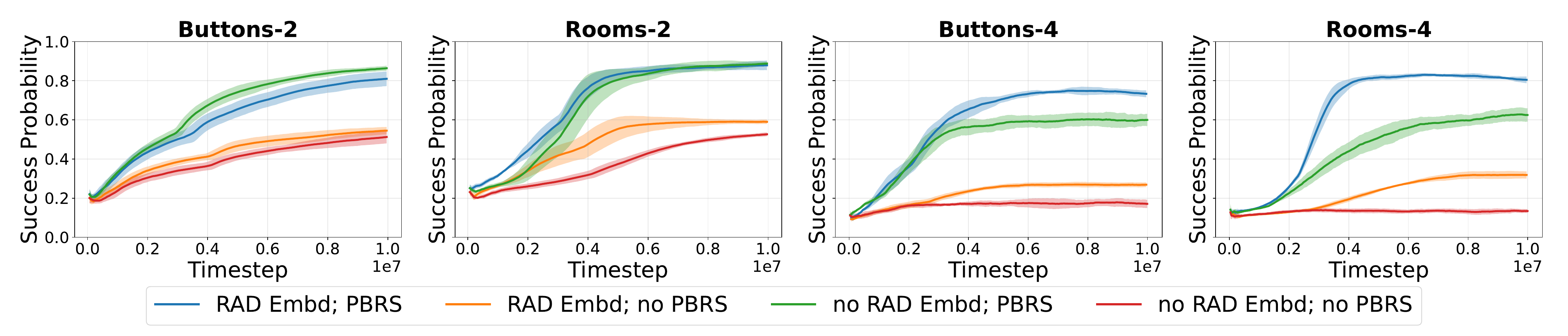}
    \caption{Success probabilities of learned policies throughout training, reported over 5 random seeds---shaded regions indicate standard deviation. ``RAD Embd; PBRS'' refers to Markovian policies conditioning on pretrained RAD Embeddings and trained with the shaped reward, i.e., the full solution proposed in \Cref{sec:dfa-marl}. We present the results with the history-dependent baseline in \Cref{fig:plots_lstm} in the Appendix.}
    \label{fig:plots}
\end{figure*}

\section{Implementation}

We implement a fully JAX-based toolchain\footnote{All packages are available at \href{https://github.com/rad-dfa}{https://github.com/rad-dfa}.} to enable efficient training.
Experiments are conducted in a new environment called \tokenenv, a discrete, fully observable multi-agent environment, where agents observe the global state from their own perspective and must coordinate via buttons and doors; we consider both two- and four-agent variants with multiple layouts.
All DFA-related operations are implemented in a native JAX package, \dfax, which also provides three task samplers: \reach DFAs (ordered token-reaching tasks), \reachavoid DFAs (ordered reach tasks with unrecoverable avoid constraints), and \reachavoidderived (\rad) DFAs, which are obtained by randomly mutating \reach and \reachavoid DFAs to induce richer branching task structure---see Appendix~\ref{appendix:dfa-dists} for details on these task classes.
We deploy a pretrained and frozen Graph Attention Network (GATv2)~\cite{brody2021attentive} to map DFAs to RAD Embeddings, as described in \cite{yalcinkaya2025provably}---see Appendix~\ref{appendix:radembed} for details on pretraining.

\begin{figure}[h]
    \centering
    \includegraphics[width=\linewidth]{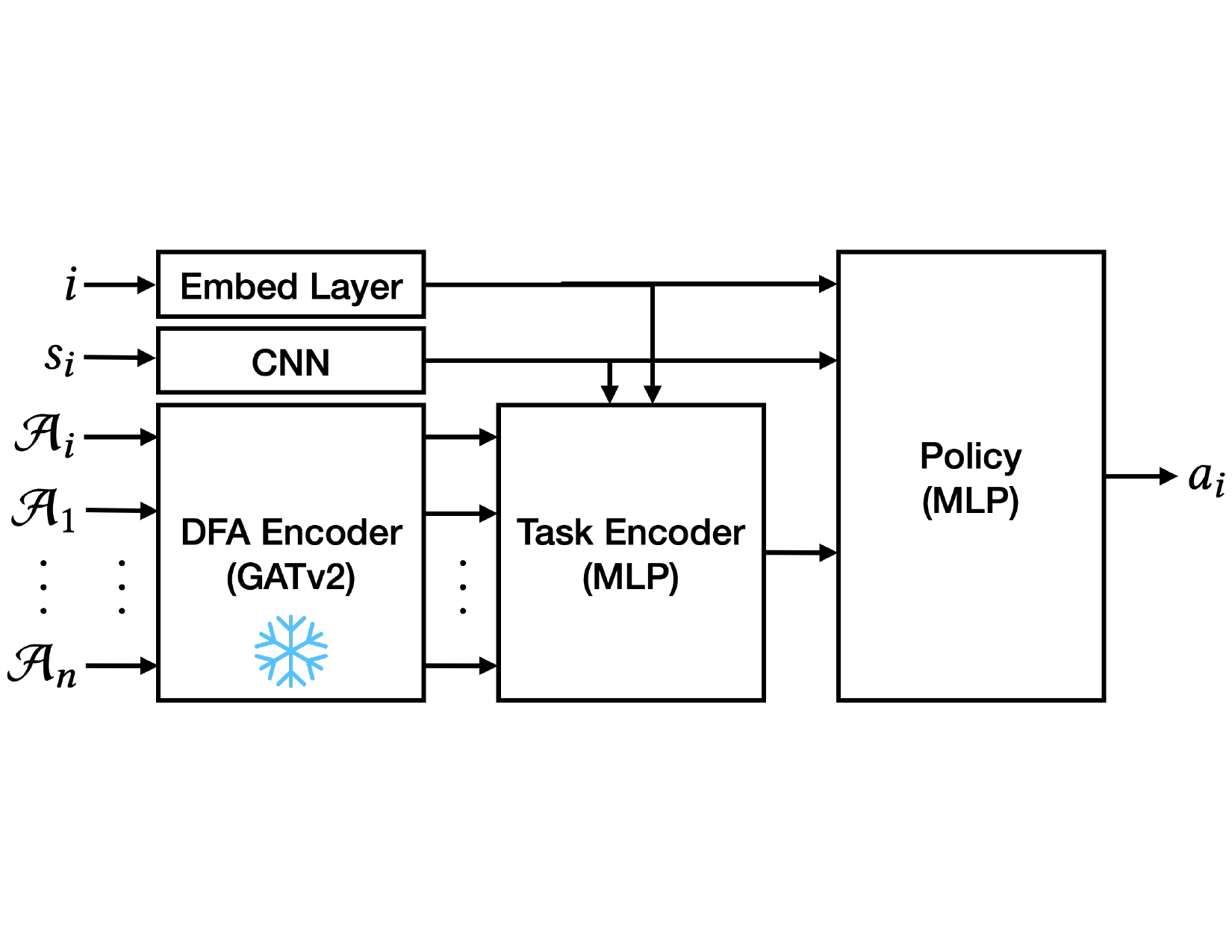}
    \caption{Policy architecture of an agent $i$, where Embed Layer is a learned lookup table, CNN is a convolutional neural network, MLP is a multilayer perceptron, and GATv2 is a graph attention~network.}
    \label{fig:arch}
\end{figure}

Each agent deploys the same decentralized policy presented in \Cref{fig:arch}, which takes the agent's ID, local observation, and the RAD Embeddings of all agents’ current progressed DFAs as input and outputs an action.
We use Independent Proximal Policy Optimization~\cite{de2020independent,lu2022discovered,rutherford2024jaxmarl} for training these policies.
See Appendix~\ref{sec:impl} for more details on the implementation.

\section{Experiments}\label{sec:exps}

This section presents an empirical evaluation of the proposed framework.
Our goal is to answer the questions listed below.
\begin{enumerate}[nosep, noitemsep, leftmargin=*, label=\textbf{(Q\arabic*)}]
    \item \label{q1} \emph{Does our approach make ACC-MARL feasible?}
    \item \label{q2} \emph{Does it scale with an increasing number of agents?}
    \item \label{q3} \emph{Do learned policies generalize?}
    \item \label{q4} \emph{Do optimal task assignments improve performance?}
    \item \label{q5} \emph{Do learned policies exhibit useful cooperative skills?}
\end{enumerate}
We conduct an ablation study to answer \ref{q1}, learning policies for all variations discussed in \Cref{sec:dfa-marl} on \reachavoidderived (\rad) DFA tasks.
To answer \ref{q2}, we train these policies in two- and four-agent layouts of \tokenenv.
For \ref{q3}, we test the learned policies on \reach, \reachavoid, and \rad DFAs.
To further test the generalization capabilities of learned policies, we also test them on DFAs with more states than seen during training.
We then compare the empirical success probabilities of agents under random and optimal task assignments to address \ref{q4}.
Finally, we qualitatively analyze agent behaviors and identify learned skills to answer \ref{q5}.

\subsection{Ablation Study}\label{sec:ablstudy}
We first tackle \ref{q1} and \ref{q2}---short answers are below.
\begin{enumerate}[nosep, noitemsep, leftmargin=*, label=\textbf{(A\arabic*)}]
    \item \emph{The proposed approach makes ACC-MARL feasible.}
    \item \emph{ACC-MARL efficiently scales from two to four agents.}
\end{enumerate}

Recall that in \Cref{sec:dfa-marl}, three main challenges to \Cref{prob:dfa-marl} are discussed: history dependency, credit assignment, and representation bottleneck; and three solutions are introduced: the Markovian reformulation, potential-based reward shaping (PBRS), and using pretrained RAD Embeddings, respectively.
We train policies for all combinations of these solutions to identify their impacts.
For the non-Markovian formulation of the game, we use an LSTM instead of the MLP Task Encoder given in \Cref{fig:arch} so that policies can track task progress.
For the second solution, we train policies with and without PBRS.
Finally, we try both a pretrained (and frozen) DFA encoder and an untrained one. 

\Cref{fig:plots} presents the success probabilities of learned policies throughout the training. Here, for instance, ``RAD Embd; PBRS'' refers to the case where we train a Markovian policy with pretrained RAD Embeddings and PBRS.
We also present the same results in terms of discounted returns in \Cref{fig:disc_return_mean,fig:disc_return_mean_lstm} in the Appendix.
Note that in \buttonstwo and \roomstwo, all history-dependent policies fail to achieve more than $0.5$ success probability.
Therefore, to ease the exposition here, we report those results with history-dependent policies in \Cref{fig:plots_lstm} in the Appendix.

\Cref{fig:plots} shows that without PBRS, none of the policies can escape sub-optimal solutions, highlighting the impact of proper credit assignment.
In \buttonstwo, Markovian policies without pretrained RAD Embeddings, i.e., ``no RAD Embd; PBRS,'' achieve a higher mean success probability and a lower variance than Markovian policies with pretrained RAD Embeddings, i.e., ``RAD Embd; PBRS.''
In \roomstwo, RAD Embeddings provide lower variance even though policies without pretrained RAD Embeddings converge to the same success probability.
On the other hand, in both four-agent environments, i.e., \buttonsfour and \roomsfour, ``RAD Embd; PBRS'' policies achieve a higher success probability than ``no RAD Embd; PBRS.''
More importantly, policies with pretrained RAD Embeddings demonstrate similar convergence behaviors in both two- and four-agent environments.
This suggests that, rather than learning latent DFA representations during training, using pretrained RAD Embeddings is crucial to making ACC-MARL feasible, especially for larger teams and harder coordination problems.
Overall, we conclude that our approach makes \Cref{prob:dfa-marl} both feasible and scalable, answering \ref{q1} and \ref{q2}.




\begin{table*}[t]
\centering
\footnotesize
\renewcommand{\arraystretch}{1.5}
\setlength{\tabcolsep}{4pt}
\begin{tabular}{|p{15pt}|p{80pt}||>{\centering\arraybackslash}m{50pt}|>{\centering\arraybackslash}m{50pt}|>{\centering\arraybackslash}m{50pt}||>{\centering\arraybackslash}m{50pt}|>{\centering\arraybackslash}m{50pt}|>{\centering\arraybackslash}m{50pt}|}
\hline
\multicolumn{8}{|c|}{\textbf{Success Probability}}\\
\hline
Env & Policy & \reach & \reachavoid & \rad & \reach (OOD) & \reachavoid (OOD) & \rad (OOD) \\
\hline\hline
\multirow{2}{*}{\rotatebox{90}{\textbf{Buttons-2}}} & RAD Embd; PBRS & \shortstack{\strut 0.860 $\pm$ 0.042 \\[-2pt] 0.869 $\pm$ 0.041} & \textbf{\shortstack{\strut 0.790 $\pm$ 0.057 \\[-2pt] 0.792 $\pm$ 0.053}} & \shortstack{\strut 0.821 $\pm$ 0.044 \\[-2pt] 0.825 $\pm$ 0.045} & \shortstack{\strut 0.677 $\pm$ 0.085 \\[-2pt] 0.670 $\pm$ 0.084} & \textbf{\shortstack{\strut 0.456 $\pm$ 0.074 \\[-2pt] 0.449 $\pm$ 0.066}} & \shortstack{\strut 0.522 $\pm$ 0.046 \\[-2pt] 0.520 $\pm$ 0.044} \\
\cline{2-8}
 & no RAD Embd; PBRS & \textbf{\shortstack{\strut 0.920 $\pm$ 0.019 \\[-2pt] 0.920 $\pm$ 0.021}} & \shortstack{\strut 0.762 $\pm$ 0.011 \\[-2pt] 0.764 $\pm$ 0.013} & \textbf{\shortstack{\strut 0.859 $\pm$ 0.020 \\[-2pt] 0.864 $\pm$ 0.021}} & \textbf{\shortstack{\strut 0.778 $\pm$ 0.023 \\[-2pt] 0.780 $\pm$ 0.025}} & \shortstack{\strut 0.368 $\pm$ 0.024 \\[-2pt] 0.375 $\pm$ 0.017} & \textbf{\shortstack{\strut 0.601 $\pm$ 0.028 \\[-2pt] 0.605 $\pm$ 0.022}} \\
\cline{2-8}
\hline\hline
\multirow{2}{*}{\rotatebox{90}{\textbf{Rooms-2}}} & RAD Embd; PBRS & \shortstack{\strut 0.890 $\pm$ 0.034 \\[-2pt] 0.919 $\pm$ 0.032} & \textbf{\shortstack{\strut 0.871 $\pm$ 0.040 \\[-2pt] 0.908 $\pm$ 0.015}} & \shortstack{\strut 0.866 $\pm$ 0.032 \\[-2pt] 0.895 $\pm$ 0.023} & \shortstack{\strut 0.723 $\pm$ 0.082 \\[-2pt] 0.759 $\pm$ 0.085} & \textbf{\shortstack{\strut 0.577 $\pm$ 0.065 \\[-2pt] 0.609 $\pm$ 0.060}} & \shortstack{\strut 0.559 $\pm$ 0.039 \\[-2pt] 0.574 $\pm$ 0.037} \\
\cline{2-8}
 & no RAD Embd; PBRS & \textbf{\shortstack{\strut 0.915 $\pm$ 0.009 \\[-2pt] 0.944 $\pm$ 0.010}} & \shortstack{\strut 0.790 $\pm$ 0.027 \\[-2pt] 0.838 $\pm$ 0.028} & \textbf{\shortstack{\strut 0.870 $\pm$ 0.016 \\[-2pt] 0.914 $\pm$ 0.009}} & \textbf{\shortstack{\strut 0.851 $\pm$ 0.029 \\[-2pt] 0.863 $\pm$ 0.027}} & \shortstack{\strut 0.492 $\pm$ 0.061 \\[-2pt] 0.523 $\pm$ 0.064} & \textbf{\shortstack{\strut 0.699 $\pm$ 0.036 \\[-2pt] 0.718 $\pm$ 0.038}} \\
\cline{2-8}
\hline\hline
\multirow{2}{*}{\rotatebox{90}{\textbf{Buttons-4}}} & RAD Embd; PBRS & \textbf{\shortstack{\strut 0.751 $\pm$ 0.044 \\[-2pt] 0.759 $\pm$ 0.043}} & \textbf{\shortstack{\strut 0.676 $\pm$ 0.032 \\[-2pt] 0.685 $\pm$ 0.039}} & \textbf{\shortstack{\strut 0.736 $\pm$ 0.031 \\[-2pt] 0.741 $\pm$ 0.019}} & \textbf{\shortstack{\strut 0.482 $\pm$ 0.029 \\[-2pt] 0.485 $\pm$ 0.036}} & \textbf{\shortstack{\strut 0.282 $\pm$ 0.021 \\[-2pt] 0.290 $\pm$ 0.018}} & \textbf{\shortstack{\strut 0.355 $\pm$ 0.005 \\[-2pt] 0.357 $\pm$ 0.012}} \\
\cline{2-8}
 & no RAD Embd; PBRS & \shortstack{\strut 0.725 $\pm$ 0.033 \\[-2pt] 0.727 $\pm$ 0.044} & \shortstack{\strut 0.366 $\pm$ 0.035 \\[-2pt] 0.366 $\pm$ 0.034} & \shortstack{\strut 0.568 $\pm$ 0.032 \\[-2pt] 0.578 $\pm$ 0.026} & \shortstack{\strut 0.345 $\pm$ 0.030 \\[-2pt] 0.347 $\pm$ 0.022} & \shortstack{\strut 0.106 $\pm$ 0.010 \\[-2pt] 0.109 $\pm$ 0.009} & \shortstack{\strut 0.245 $\pm$ 0.010 \\[-2pt] 0.247 $\pm$ 0.011} \\
\cline{2-8}
\hline\hline
\multirow{2}{*}{\rotatebox{90}{\textbf{Rooms-4}}} & RAD Embd; PBRS & \textbf{\shortstack{\strut 0.832 $\pm$ 0.017 \\[-2pt] 0.856 $\pm$ 0.025}} & \textbf{\shortstack{\strut 0.764 $\pm$ 0.018 \\[-2pt] 0.830 $\pm$ 0.018}} & \textbf{\shortstack{\strut 0.795 $\pm$ 0.010 \\[-2pt] 0.830 $\pm$ 0.015}} & \textbf{\shortstack{\strut 0.539 $\pm$ 0.058 \\[-2pt] 0.544 $\pm$ 0.069}} & \textbf{\shortstack{\strut 0.363 $\pm$ 0.020 \\[-2pt] 0.381 $\pm$ 0.020}} & \textbf{\shortstack{\strut 0.392 $\pm$ 0.012 \\[-2pt] 0.400 $\pm$ 0.010}} \\
\cline{2-8}
 & no RAD Embd; PBRS & \shortstack{\strut 0.756 $\pm$ 0.056 \\[-2pt] 0.813 $\pm$ 0.047} & \shortstack{\strut 0.341 $\pm$ 0.029 \\[-2pt] 0.427 $\pm$ 0.030} & \shortstack{\strut 0.600 $\pm$ 0.049 \\[-2pt] 0.667 $\pm$ 0.033} & \shortstack{\strut 0.394 $\pm$ 0.056 \\[-2pt] 0.413 $\pm$ 0.051} & \shortstack{\strut 0.108 $\pm$ 0.014 \\[-2pt] 0.121 $\pm$ 0.014} & \shortstack{\strut 0.281 $\pm$ 0.023 \\[-2pt] 0.302 $\pm$ 0.012} \\
\cline{2-8}
\hline
\end{tabular}
\caption{Results are for random (top) and optimal assignments (bottom), and averaged over 5 seeds, each run for 1,000 episodes. Here, ``RAD Embd; PBRS'' refers to Markovian policies with PBRS and pretrained RAD Embeddings, i.e., the full solution in \Cref{sec:dfa-marl}.}
\label{table:probs}
\end{table*}

\subsection{Evaluating Policies and Task Assignments}\label{sec:test}

We continue with \ref{q3} and \ref{q4}---short answers are below.
\begin{enumerate}[nosep, noitemsep, leftmargin=*, label=\textbf{(A\arabic*)}, start=3]
    \item \emph{Learned policies exhibit generalization across tasks.}
    \item \emph{Optimal task assignments improve team performance.}
\end{enumerate}

We take the best policies from \Cref{sec:ablstudy}, i.e., ``RAD; PBRS'' and ``no RAD; PBRS,'' and test them for 1,000 episodes on various task classes.
Notice that these policies are trained on \rad DFAs with at most $5$ states.
So, we test the policies on \reach, \reachavoid, and \rad DFAs with at most $5$ states, evaluating the performances of learned policies with respect to different task distributions.
We also test these policies on out-of-distribution (OOD) DFAs, i.e., DFAs with at most $10$ states, denoted by the ``(OOD)'' suffix.

The results are presented in \Cref{table:probs}, where the top line of each cell reports the results of random task assignments, and the bottom line is for the optimal ones.
Overall, policies generalize to \reach and \reachavoid DFAs.
The policy without pretrained RAD Embeddings outperforms the one with RAD Embeddings on both \reach and \rad DFAs in \buttonstwo and \roomstwo, whereas it falls short for \reachavoid DFAs in these environments, suggesting that the notion of \avoid, i.e., the mission cannot be recovered once an avoid token is reached, is captured better by pretrained RAD Embeddings.
On the other hand, in both four-agent environments, policies with pretrained RAD Embeddings outperform the baseline across the board.
This suggests that the impact of pretrained RAD Embeddings manifests itself best with larger teams in harder environments.
We see a similar pattern for OOD DFAs.
Additionally, for OOD DFAs, in two-agent environments, both policies demonstrate noticeable generalization on \reach tasks.
Overall, we conclude that policies trained on \rad DFAs exhibit generalization, answering \ref{q3}.

Comparing top and bottom lines of each cell in \Cref{table:probs}, we see that optimally assigning tasks does not change the performance in \buttonstwo and \buttonsfour, as the agents are in almost symmetric states.
On the other hand, in \roomstwo and \roomsfour, as expected, we see a performance improvement compared to random assignments.
This confirms our observation in \Cref{sec:assign} that learned value functions can be used for computing optimal task assignments and therefore improve team performance, answering \ref{q4}. 

\begin{figure*}[t]
    \centering
    \begin{subfigure}{0.49\linewidth}
        \centering
        \includegraphics[width=0.96\linewidth]{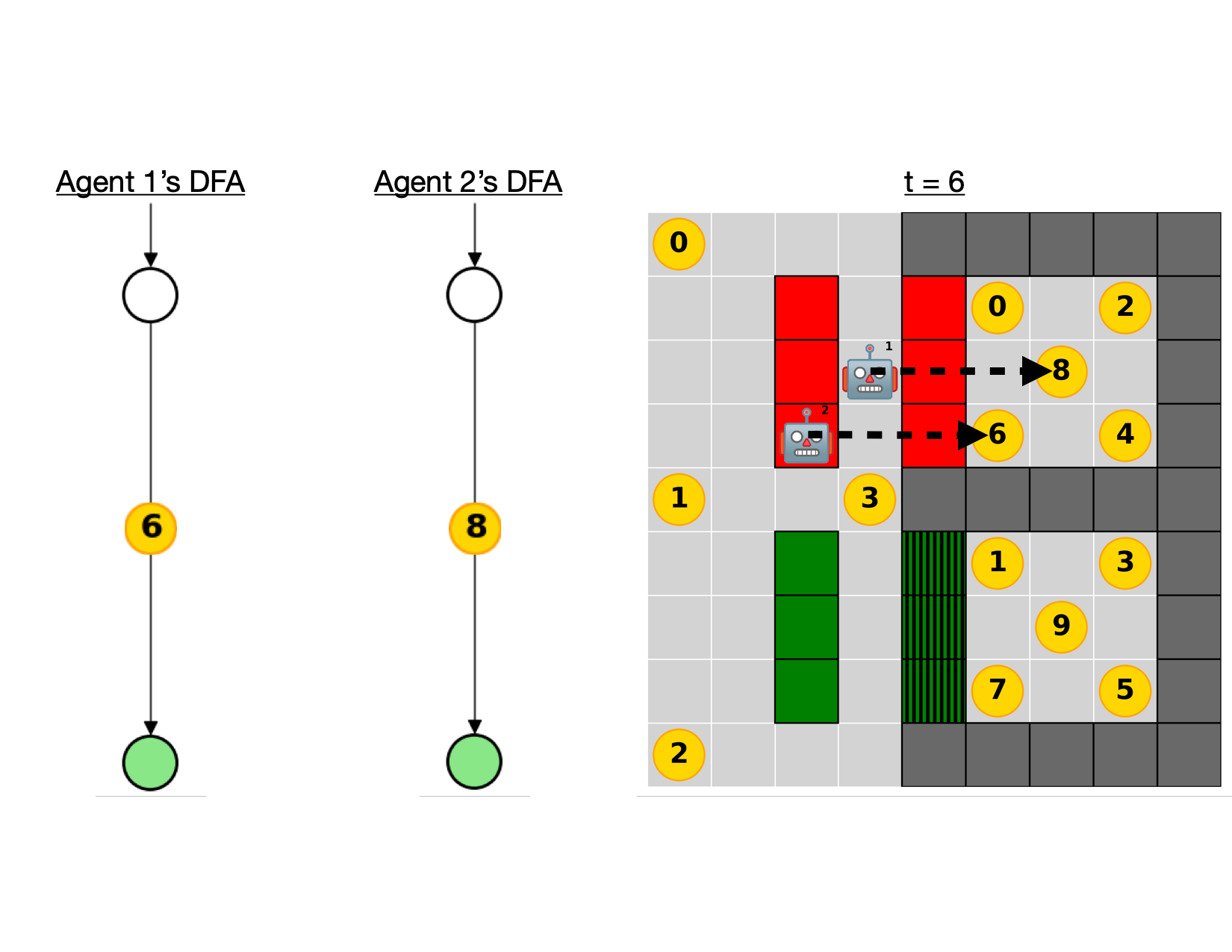}
        \caption{Agents go to the same room by moving together and keeping the door open to save time, i.e., agent 1 \emph{holds the door} for agent 2.}
        \label{fig:short_hold_the_door}
    \end{subfigure}
    \hfill
    \begin{subfigure}{0.49\linewidth}
        \centering
        \includegraphics[width=0.96\linewidth]{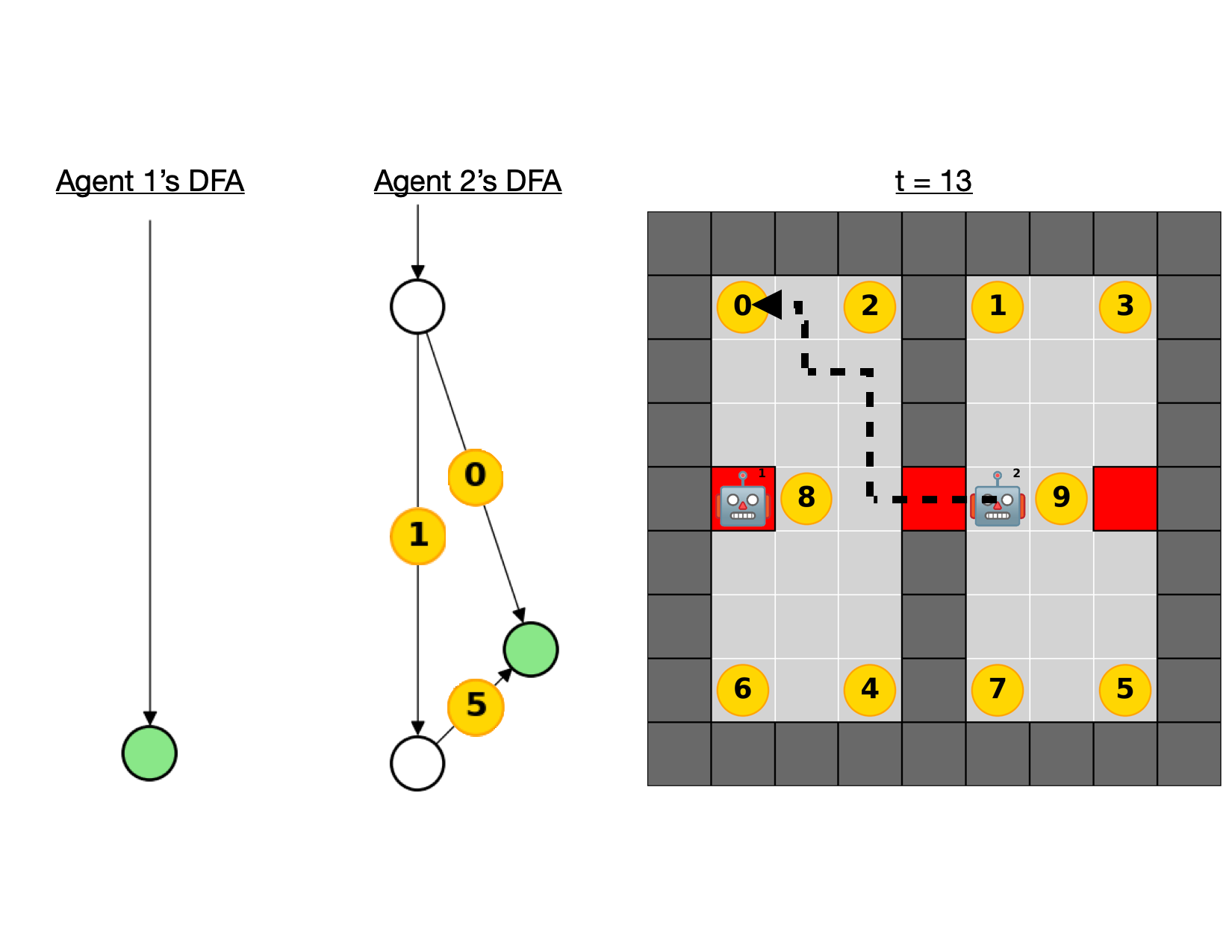}
        \caption{Agents \emph{short-circuit} the task, i.e., agent 1 opens the door and agent 2 completes its DFA by reaching one token instead of~two.}
        \label{fig:short_short_circuit}
    \end{subfigure}
    \caption{A qualitative analysis of learned policies. See \Cref{fig:buttons2_helper_agent,fig:buttons2_hold_the_door,fig:rooms2_helper_agent,fig:rooms2_short_circuit} in the Appendix for longer traces.}
    \label{fig:traces}
\end{figure*}

\subsection{Qualitative Analysis of Learned Policies}

Finally, we answer \ref{q5}---the short answer is given below.
\begin{enumerate}[nosep, noitemsep, leftmargin=*, label=\textbf{(A\arabic*)}, start=5]
    \item \emph{Learned policies exhibit cooperative behaviors, such as pressing a button to unlock a door, holding the door, and helping other agents short-circuit their tasks.}
\end{enumerate}

We conduct a qualitative analysis of the learned policies with pretrained RAD Embeddings and PBRS, i.e., ``RAD Embd; PBRS'' in \Cref{fig:plots}.
We show that in \buttonstwo, agents learn to \emph{hold the door} for each other to save time, and in \roomstwo, agents learn to \emph{short-circuit their DFAs} with the help of a \emph{helper agent}.
To ease the exposition, we present these behaviors in a compact form.
A more detailed analysis is presented in \Cref{fig:buttons2_helper_agent,fig:buttons2_hold_the_door,fig:rooms2_helper_agent,fig:rooms2_short_circuit} in the Appendix.

Consider the case in \Cref{fig:short_hold_the_door} for \buttonstwo, where agent 1 is assigned a DFA that says ``reach token 6,'' agent 2's DFA says ``reach token 8,'' and the initial state of the environment is as in \Cref{fig:first_example}, so the agents need to go to the same room.
At $t = 6$, agents meet by the door, where agent 1 waits, and agent 2 presses a red button to open red doors---see state of the environment in \Cref{fig:short_hold_the_door}.
Then, instead of going into the room in turns, agents synchronously move towards the room, i.e., at $t = 7$, agent 1 is on the door, holding it, and agent 2 is in front of the door.
They keep moving in this manner until both agents are in the room and complete their tasks.
This behavior shows that agents utilize the environment dynamics to accomplish their tasks optimally
(see \Cref{fig:buttons2_hold_the_door} in the Appendix for another example).

For \roomstwo, consider the case given in \Cref{fig:short_short_circuit}, where agent 1 is assigned a trivially accepting DFA, i.e., agent 1 does not have a task---it is an helper agent, and agent 2's DFA says ``reach token 0, or reach tokens 1 and 5 in that order,'' and the initial state of the environment is as given in \Cref{fig:splash}.
At $t = 13$, given in \Cref{fig:short_short_circuit}, agent 1 is on the red button to unlock the door so that agent 2 can short-circuit its DFA by taking the shorter path, i.e., agent 2 can reach a single token in the other room instead of reaching two in its current room.
From this point onward, agent 1 stays away from agent 2's path, and agent 2 reaches token 0 and completes its DFA.
This behavior shows that if there are multiple ways to complete a DFA, agents learn to cooperate so that the task can be accomplished optimally (see \Cref{fig:rooms2_short_circuit} in the Appendix for another trace of this~behavior).

\section{Related Work}\label{sec:related}
\noindent
\textbf{Multi-Task RL with Formal Specifications.}
Formal specifications have become popular for task representation in multi-task RL due to their ability to define multiple tasks over a common alphabet of environmental events.
Previous work has explored instructing a goal-conditioned policy to follow symbolically computed paths in automata~\cite{jothimurugan2021compositional,qiu2023instructing,jackermeier2024deepltl,guo2025one}.
Others have proposed conditioning on temporal logic formulas~\cite{vaezipoor2021ltl2action} and automata~\cite{yalcinkaya2023automata}.
Further results have shown that the graph structure of automata allows defining useful priors to facilitate generalization by pretraining automata embeddings and using them for downstream control~\cite{yalcinkaya24compositional,yalcinkaya2025provably}.
However, these efforts are limited to RL---we extend them to the multi-agent setting.

\noindent
\textbf{MARL with Formal Specifications.}
The use of formal specifications in MARL has been first studied under known environment dynamics~\cite{karimadini2016cooperative,schillinger2018decomposition,schillinger2018simultaneous}.
In model-free settings such as ours, different types of specifications have been used as a single fixed objective~\cite{DBLP:conf/atal/HammondA0W21,DBLP:conf/amcc/SunLB20,DBLP:journals/ieiceta/TerashimaKY24,DBLP:conf/qestformats/PaulBQD24}.
Prior work has also explored hand-designing decompositions~\cite{neary2020reward} and using heuristics~\cite{smith2023automatic} to assign tasks to agents in a team.
Recent efforts have proposed simultaneously learning optimal task decompositions and policies to achieve a single fixed objective~\cite{shah2025learning}.
However, to the best of our knowledge, using formal specifications in the multi-task setting, i.e., a priori unknown tasks assigned at runtime, has not been considered.

\noindent
\textbf{Multi-Task MARL.}
Prior work in multi-task MARL takes a variety of approaches to learn policies that can generalize across tasks.
Hierarchical methods have been studied for learning skill graphs \cite{zhu2025multi, zhang2022discovering}, for separating the next goal allocation problem from goal-conditioned policies~\cite{Iqbal2022, li2024hierarchical}, and for learning sub-task policies that are solutions of sub-MDPs \cite{wang2021rode, matsuyama2025cordgeneralizablecooperationrole}.
Others have proposed factorized value functions~\cite{iqbal2021randomized} and transformers~\cite{hu2021updet} to robustly generalize across tasks.
Scheduling ~\cite{yu2023prioritized,zhu2023multi} and knowledge distillation~\cite{mai2023deep} have been shown to improve sample efficiency in multi-task MARL.
This problem has also been considered under partial observability~\cite{omidshafiei2017deep}.
To the best of our knowledge, we are the first to use formal specifications, i.e., structurally rich temporal task representations, in multi-task MARL.

\section{Discussion}

In this section, we discuss the benefits and limitations of the proposed framework, highlight important details, and propose potential directions for future work.

We start by highlighting the benefits of using DFAs as an instruction modality.
First, their operational semantics enable efficient DFA progression as described in \Cref{sec:history}, which in turn allows learning Markovian policies.
Second, the compositional nature of DFAs enables a complex task to be decomposed into simpler sub-tasks that can be assigned to individual agents, while still ensuring satisfaction of the overall specification~\cite{neary2020reward,smith2023automatic,shah2025learning}.
Third, DFAs are expressive for finite-horizon behaviors: any finite behavior over an alphabet can be represented as a path in a DFA, and a set of such behaviors can be specified by encoding each finite behavior as a path and then minimizing the resulting DFA to get a compact task representation~\cite{DBLP:journals/sigact/Sipser96}.
Thus, DFAs are crucial to the scalability and expressivity of our framework.

We continue with a discussion of the limitations and drawbacks of our framework.
First, one major drawback of using DFAs is the assumption of a fixed labeling function per agent, which maps environment observations to alphabet symbols, as defined in \Cref{prob:dfa-marl}, and thus defines the semantics of DFA tasks.
In our current setting, these labeling functions are known and remain fixed across training and evaluation.
Extending our framework to support dynamic or learned labeling functions, i.e., an open or evolving alphabet where task semantics may change across episodes, is an important direction for future work, but it introduces additional challenges that warrant in-depth investigation in their own right and are therefore beyond the scope of this paper.
Second, the proposed approach assumes full observability of the environment state; however, in certain application domains, it might not be possible to maintain this assumption.
Therefore, another promising avenue for future work is to extend the proposed approach to partially observable settings.
Third, as discussed in \Cref{sec:assign}, the proposed approach for optimally assigning tasks at test time requires enumerating all assignments.
Even though this is not a major issue for teams of appropriate size, it can be a bottleneck for large teams.
To this end, RAD Embeddings and function approximation can be utilized to develop scalable learning-based solutions for the task assignment problem, which is another potential direction for future work.
Lastly, our empirical evaluation in \Cref{sec:exps} is based solely on a discrete domain.
Although this provides a controlled means for validating our framework, applications to continuous domains, high-fidelity simulators, and real robotic platforms remain open for further future investigations.

We end by emphasizing that our work is the first to use formal specification in multi-task cooperative MARL, with homogeneous agents.
As such, there are no directly comparable baselines in the literature that address the same problem setting.
In the context of multi-task MARL, a natural direction for future work is to consider the use of formal specifications in heterogeneous teams, where agents may have different action and observation capabilities, as well as in adversarial, mixed cooperative, and competitive settings.

\section{Conclusion}\label{sec:conclusion}
This paper introduced ACC-MARL, a framework for task-conditioned multi-agent team policies.
First, we addressed challenges to its feasibility and proved that our approach is optimal with respect to the stated problem.
Second, we showed an optimal task assignment method using learned value functions.
Third, we discussed our implementation, including a toolchain for incorporating automata tasks in other applications.
Finally, we presented an empirical evaluation, highlighting the efficacy of ACC-MARL in learning multi-task, multi-agent, decentralized team policies. 

\newpage
\section*{Acknowledgements}
This work is partially supported by the DARPA ANSR and TIAMAT programs, by Nissan under the iCyPhy Center, and by NVIDIA’s Academic Grant Program.

\section*{Impact Statement}
This paper presents work whose goal is to advance the fields of machine learning and reinforcement learning. There are many potential societal consequences of our work, none of which we feel must be specifically highlighted here.

\bibliography{references}
\bibliographystyle{icml2026}

\newpage
\appendix
\onecolumn


\section{Proof of \Cref{thm:markov}}\label{proof:thm:markov}

\markov*
\begin{proof}
    We first show that for every state of $\teampolicy$, there exists a state of $\teampolicy'$ with equivalent reward and transition probabilities.
    We then use this fact to prove the theorem.
    
    Define a mapping $\mapsto: (\mdpstates^* \times \dfaset^n) \to (\mdpstates \times \dfaspace^n)$ such that
    \begin{align}\label{eqn:map}
        (\mdpstate_0, \dots, \mdpstate_{t}, \dfajointstate_0) \mapsto (\mdpstate_{t}, \dfajointstate_{t}),
    \end{align}
    where $\dfajointstate_{t} = \dfajointstate_0 / \labelf(\mdpstate_0), \dots, \labelf(\mdpstate_{t})$.
    Write $\objective(\teampolicy)$ in terms of its step reward by assuming all non-zero rewards are terminal:
    \begin{align*}
    \objective(\teampolicy)
    &=
    \prob_{
        \substack{
            \dfajointstate_0 \sim \mdpinitdist_\dfaset^n \\
            \trace \sim \mdp, \teampolicy, \dfajointstate_0
        }
    }
    \left[
        \bigwedge_{i=1}^{n}
        \trace \models_{\labelf_i} 
        \dfajointstate_0[i]
    \right]\\
    &=
    \expect_{
        \substack{
            \mdpstate_0 \sim \mdpinitdist\\
            \dfajointstate_0 \sim \mdpinitdist_\dfaset^n
        }
    }
    \left[
        \sum_{t=0}^{
        T
        }
        \ind{
        \bigwedge_{i=1}^{n}
        \mdpstate_0, \dots, \mdpstate_{t+1} \models_{\labelf_i} 
        \dfajointstate_0[i]
        }
    \right]\\
    &=
    \expect_{
        \substack{
            \mdpstate_0 \sim \mdpinitdist\\
            \dfajointstate_0 \sim \mdpinitdist_\dfaset^n
        }
    }
    \left[
        \sum_{t=0}^{
        T
        }
        \mdpreward
        \left(
        \mdpstate_0, \dots, \mdpstate_{t+1},
        \dfajointstate_0
        \right)
    \right],
    \end{align*}
    where $\mdpstate_{t+1}\sim\mdptransprob(\mdpstate_t, \mdpaction_t)$, $\mdpaction_t \sim \teampolicy(\mdpstate_0, \dots, \mdpstate_t, \dfajointstate_0)$, and $T$ denotes the finite horizon of the game.
    
    For all $(\mdpstate_0, \dots, \mdpstate_{t}, \dfajointstate_0) \in \mdpstates^* \times \dfaset^n$, we have
    $
    \mdpreward
        \left(
        \mdpstate_0, \dots, \mdpstate_{t+1},
        \dfajointstate_0
        \right) = 1
    $
    if and only if all DFAs in $\dfajointstate_0$ accept their corresponding labeled traces.
    We can equivalently write this statement as follows:
    \[
         \dfajointstate_0 / \labelf(\mdpstate_0), \dots, \labelf(\mdpstate_{t+1})
         =
         \dfajointstate_t / \labelf(\mdpstate_{t+1})
         =
         \dfajointstate_\top,
    \]
    which implies that the reward of the Markovian game state $(\mdpstate_{t}, \dfajointstate_{t})$ given by \Cref{eqn:map} is also one, i.e.,  $\mdpreward_{\dfaspace^n}(\dfajointstate_t, \labelf(\mdpstate_{t+1})) = 1$ due to \Cref{eqn:dfaspace_tran,eqn:dfaspace_rew}.
    If
    $
    \mdpreward
        \left(
        \mdpstate_0, \dots, \mdpstate_{t+1},
        \dfajointstate_0
        \right) = 0,
    $
    then there exists an $i$ such that $\dfajointstate_0[i]$ doesn't accept the trace labeled with $\labelf_i$, i.e.,
    $
         \dfajointstate_t / \labelf(\mdpstate_{t+1})
         \neq
         \dfajointstate_\top
    $,
    and therefore $\mdpreward_{\dfaspace^n}(\dfajointstate_t, \labelf(\mdpstate_{t+1})) = 0$.
    Thus, rewards for $\teampolicy$ and $\teampolicy'$ are the same with respect to the mapping in \Cref{eqn:map}.

    For all $(\mdpstate_0, \dots, \mdpstate_{t}, \dfajointstate_0) \in \mdpstates^* \times \dfaset^n$, the probability of transitioning to a next state $(\mdpstate_0, \dots, \mdpstate_{t + 1}, \dfajointstate_0) \in \mdpstates^* \times \dfaset^n$ is given by the undelying Markovian dynamics $\mdptransprob(\mdpstate_{t + 1} \mid \mdpstate_t, \mdpaction_t)$ for $\mdpaction_t \sim \teampolicy(\mdpstate_0, \dots, \mdpstate_{t}, \dfajointstate_0)$.
    For the corresponding Markovian state $(\mdpstate_{t}, \dfajointstate_{t})$ given by \Cref{eqn:map}, the probability of transitioning to a next state $(\mdpstate_{t + 1}, \dfajointstate_{t+1})$ is as follows:
    \[
    \mdptransprob'(\mdpstate_{t + 1}, \dfajointstate_{t+1} \mid \mdpstate_{t}, \dfajointstate_{t})
    =
    \mdptransprob(\mdpstate_{t + 1} \mid \mdpstate_{t}, \mdpaction_{t})
    \ind{
    \dfajointstate_{t} / \labelf(\mdpstate_{t+1})
    =
    \dfajointstate_{t+1}
    },
    \]
    i.e., transition using the same underlying Markovian dynamics $\mdptransprob(\mdpstate_{t + 1} \mid \mdpstate_{t}, \mdpaction_{t})$ and mask based on the deterministic product DFA transition $\ind{
    \dfajointstate_{t} / \labelf(\mdpstate_{t+1})
    =
    \dfajointstate_{t+1}
    }$.
    Since $(\mdpstate_0, \dots, \mdpstate_{t}, \dfajointstate_0) \mapsto (\mdpstate_{t}, \dfajointstate_{t})$ and any next state $(\mdpstate_0, \dots, \mdpstate_{t+1}, \dfajointstate_0) \mapsto (\mdpstate_{t+1}, \dfajointstate_{t+1})$, we have that $\ind{
    \dfajointstate_{t} / \labelf(\mdpstate_{t+1})
    =
    \dfajointstate_{t+1}
    } = 1$ by the construction of the mapping given in \Cref{eqn:map}.
    Therefore, for any transition from $(\mdpstate_0, \dots, \mdpstate_{t}, \dfajointstate_0)$ to $(\mdpstate_0, \dots, \mdpstate_{t+1}, \dfajointstate_0)$ in the non-Markovian game, there exists a transition (given by \Cref{eqn:map}) with the same probability from $(\mdpstate_{t}, \dfajointstate_{t})$ to $(\mdpstate_{t + 1}, \dfajointstate_{t+1})$ in the Markovian reformulation of the game.
    Hence, the transitions for $\teampolicy$ and $\teampolicy'$ are equivalent with respect to deterministic mapping in \Cref{eqn:map}.

    Finally, as the rewards and transition probabilities for $\teampolicy$ and $\teampolicy'$ are the same under the mapping given in \Cref{eqn:map}, the optimal values for $\objective(\teampolicy)$ from \Cref{eqn:obj_vanilla} and $\objective'_{\mdpdisc}(\teampolicy')$ from \Cref{eqn:obj1} are the same as $\mdpdisc \to 1^{-}$, which completes the proof.
\end{proof}

\begin{figure*}[t]
    \centering
    \begin{subfigure}{0.49\linewidth}
        \centering
        \includegraphics[width=\linewidth]{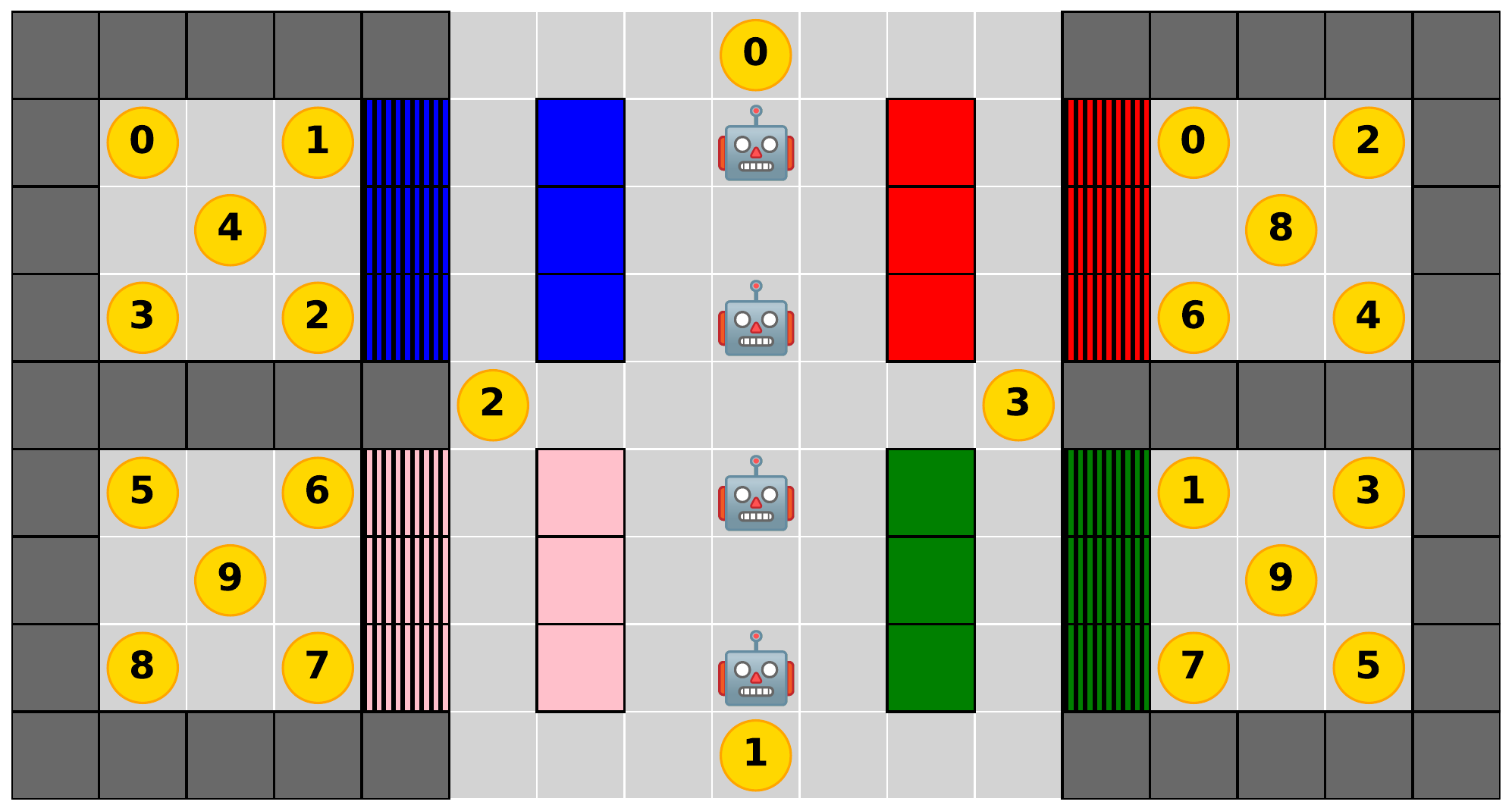}
        \caption{\buttonsfour}
        \label{fig:4buttons_4agents}
    \end{subfigure}
    \hfill
    \begin{subfigure}{0.49\linewidth}
        \centering
        \includegraphics[width=\linewidth]{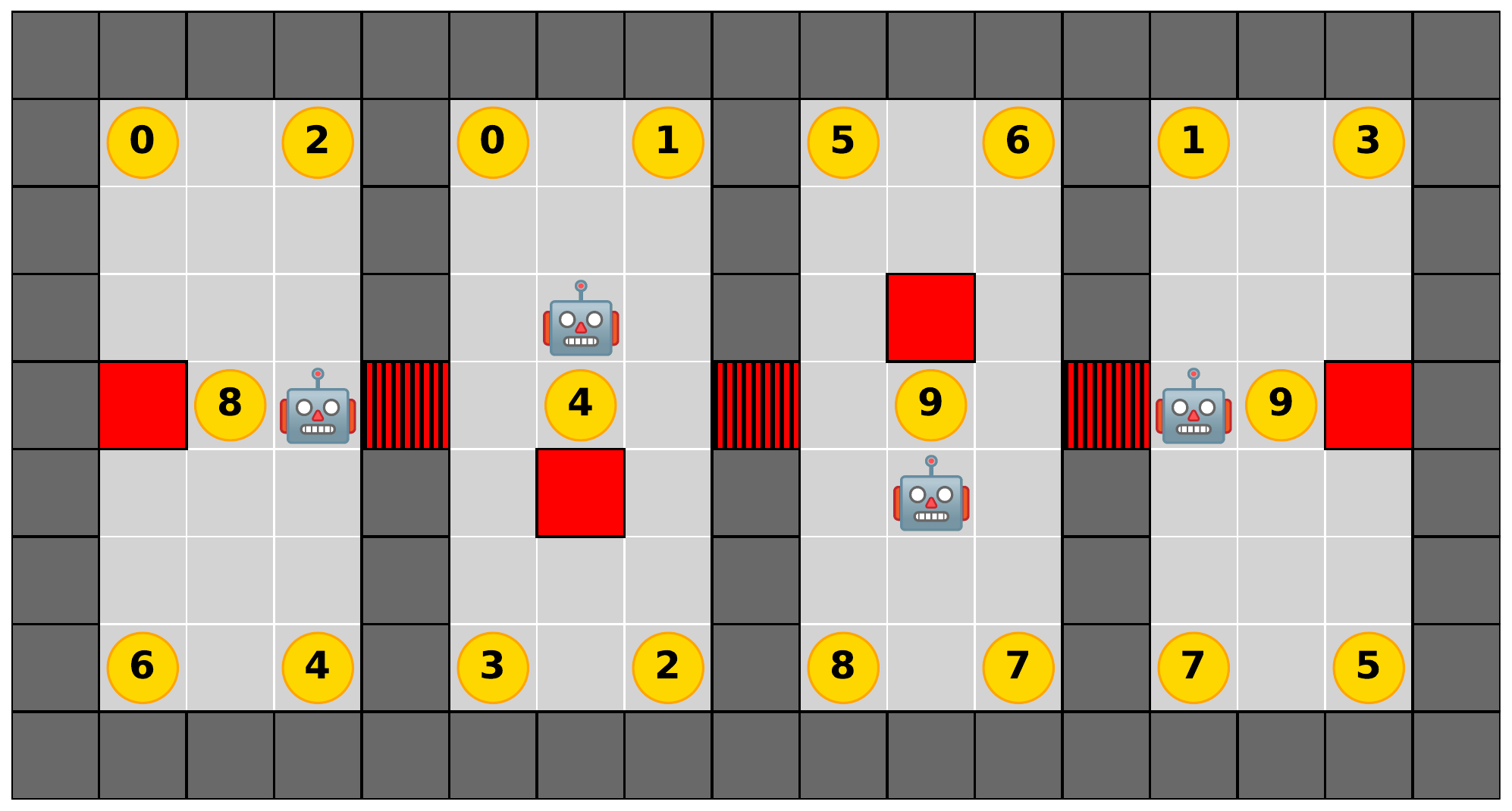}
        \caption{\roomsfour}
        \label{fig:4rooms_4agents}
    \end{subfigure}
    \caption{Considered four-agent variations of \tokenenv.}
    \label{fig:four_side_by_side}
\end{figure*}

\section{Implementation Details}\label{sec:impl}

In this section, we discuss the details of our practical Python implementation\footnote{Available at \href{https://github.com/rad-dfa/acc-marl}{https://github.com/rad-dfa/acc-marl}.}
of the theoretical framework given in \Cref{sec:dfa-marl}.
We follow the practices presented by PureJaxRL~\cite{lu2022discovered} and JaxMARL~\cite{rutherford2024jaxmarl} and implement our framework in JAX~\cite{jax2018github}, an accelerator-oriented automatic differentiation library.

\subsection{Environments}\label{sec:envs}
We introduce a new environment called \tokenenv
\footnote{Available at \href{https://github.com/rad-dfa/dfa-gym}{https://github.com/rad-dfa/dfa-gym}.},
a fully observable discrete multi-agent environment implemented in JAX.
Agents observe the global state from their own point of view, i.e., one-hot encodings of objects appear relative to the agent's position, and synchronously move in four cardinal directions or stay where they are.
There are different tokens in the environment. 
If an agent is on a token, then its state is labeled as that token, i.e., \emph{agents do not collect tokens but reach tokens}, and therefore tokens are unlimited.
At the beginning of every episode, each agent is assigned an a priori unknown DFA task defined over these tokens.

We consider two- and four-agent variants of \tokenenv and four layouts in total:
\buttonstwo and \buttonsfour given in \Cref{fig:first_example,fig:4buttons_4agents}, respectively, and \roomstwo and \roomsfour given in \Cref{fig:splash,fig:4rooms_4agents}, respectively.
In all layouts, to go through the striped colored cells---serving as closed doors, an agent needs to be on the corresponding colored cells---serving as buttons.
Therefore, agents must cooperate to complete their tasks.
\buttonstwo and \buttonsfour require agents to reason about other agents' DFA tasks to press the correct buttons; and \roomstwo and \roomsfour emphasize asymmetric conditions agents can be in, therefore providing a testbed for evaluating the impact of optimal task assignments.


\subsection{DFA Distributions}\label{appendix:dfa-dists}
We implement a native JAX package, called \dfax
\footnote{Available at \href{https://github.com/rad-dfa/dfax}{https://github.com/rad-dfa/dfax}.},
facilitating all operations on DFAs defined in this paper.
For DFA minimization,
we implement a parallel algorithm called \emph{naive partition refinement} as described in
\cite{martens2024evaluation}.
\dfax package includes samplers for generating different types of DFA tasks:
(i) \reach DFAs order tokens, e.g., ``reach $1$ and then reach $2$,''
(ii) \reachavoid DFAs order tokens with hard constraints, i.e., unrecoverable conditions, e.g., ``reach $1$ while avoiding $2$,'' and
(iii) \reachavoidderived (\rad) DFAs are randomly mutated \reach and \reachavoid DFAs, e.g., ``reach $1$ while avoiding $2$, and if you reach $3$ before reaching $1$, then you must reach $4$ too'', and therefore \rad DFAs define a richer task structure than \reach and \reachavoid.
Random algorithms for sampling these DFAs are given in \Cref{alg:reach,alg:reachavoid,alg:rad}, and example DFAs sampled from these distributions are given in \Cref{fig:a_reach_dfa,fig:a_reachavoid_dfa,fig:a_rad_dfa}, respectively.

\rad DFAs were first introduced in \cite{yalcinkaya24compositional} as a prior distribution for learning automata-conditioned policies in the single-agent setting.
It was shown that policies trained on \rad DFAs can generalize to other DFA classes such as \reach and \reachavoid.
Thus, to learn agents that can handle a large class of tasks, we use the \rad DFA distribution as our~prior.
During training, we sample \rad DFAs with at most $5$ states, where the number of states is sampled uniformly.
At the beginning of each episode, for $n$ agents, we sample $n$ \rad DFAs such that at least one of these DFAs is non-trivial, i.e., not trivially accepting or rejecting.
However, we still allow sampling of trivially accepting DFAs to instantiate \emph{helper agents}, i.e., agents that are not assigned a DFA but are there to help others.
\Cref{alg:ma_dfa_dist} presents our sampling procedure for assigning random DFA tasks (from a given distribution) to agents in a Markov game.

To facilitate the proposed solution for mitigating history dependency, we augment the underlying environment, i.e., \tokenenv, and include agents' latest minimal DFA tasks in returned observations, as described in \Cref{sec:history}.
For the credit assignment problem, we return the shaped reward defined in \Cref{sec:credit}.
Finally, to learn a provably correct DFA encoder addressing the representation bottleneck challenge, we train a GATv2~\cite{brody2021attentive} over \rad DFAs as described in \cite{yalcinkaya2025provably} with a few minor changes discussed below.

\subsection{Pretraining RAD Embeddings}\label{appendix:radembed}

We pretrain a GATv2 encoder over the induced DFA space of \rad DFAs as defined in \Cref{eqn:dfaspace}.
During this pretraining, we sample \rad DFAs with at most $10$ states sampled from a geometric distribution.
As we use JAX for our practical implementation, which enforces new constraints over the code, such as working with fixed-sized arrays, we introduce changes to the DFA featurization and the GATv2 architecture used in \cite{yalcinkaya2025provably}, listed below, respectively.
\begin{enumerate}[nosep, noitemsep, leftmargin=*]
    \item In both \cite{yalcinkaya24compositional} and \cite{yalcinkaya2025provably}, DFA featurization involves reinterpreting transitions of a DFA as nodes and encoding constraints of these transitions as one-hot node features, which substantially increases the number of nodes in DFA featurization. Instead, we interpret the DFA transitions as edges and encode the constraints as one-hot edge features.
    \item Consequently, we adapt our GATv2 architecture to accommodate edge features rather than solely using node~features.
\end{enumerate}

Given a DFA $\dfa = \langle \dfastates, \dfatokens, \dfatrans, \dfastate_0, \dfafinalstates \rangle$, we construct its featurization $G = (V, E, h, e)$, representing as a graph, where
\begin{itemize}[nosep, noitemsep, leftmargin=*]
    \item $V$ is the nodes, containing a node for each state of $\dfa$,
    \item $E$ is the edges, containing an edge for each transition of $\dfa$,
    \item $h$ is the node features, i.e., one-hot vectors encoding whether states are initial, accepting, rejecting, or neither,
    \item $e$ is the edge features, i.e., one-hot vectors encoding constraints of their corresponding transitions.
\end{itemize}
We refer to the features of a node $v \in V$ as $h_{v}$ and the features of an edge between nodes $v, u \in V$ as $e_{vu}$.
In our GATv2 implementation, at each message passing step, node features are updated as:
\[
h_v' = \sum_{u\in N^{-1}(v)} \alpha_{vu} W_{msg} \left[h_u \parallel e_{vu}\right],
\]
where $N^{-1}(v)$ is the set of nodes with edges to $v$, $W_{msg}$ is a linear map, and $\alpha_{vu}$ is the attention score between $v$ and $u$ computed as:
\[
    \alpha_{vu} = \operatorname{softmax}_v \left(
    a^{\top} \mathrm{LeakyReLU}\left(
            W_{atn} \left[h_{v} \parallel e_{vu} \parallel h_{u}\right] \right)
    \right),
\]
where $a$ is a vector and $W_{atn}$ is a linear map.
For a DFA with $n$ states, we perform $n$ message passing steps, which guarantees that the node representing the initial state of the DFA has received messages from all $n$ nodes.
Therefore, we pick the feature vector of this node as the embedding of the corresponding DFA task $\dfa$.
All other details are identical to those presented in \cite{yalcinkaya24compositional,yalcinkaya2025provably}.

\subsection{Decentralized Policies}
We learn a single policy deployed for each agent independently.
The policy architecture for an agent $i$ is given in \Cref{fig:arch}.
An agent $i$, respectively, takes its agent ID $i$, the global state from its own point of view, denoted by $\mdpstate_i$, its assigned DFA task $\dfa_i$, and the DFAs of other agents ordered by their IDs.
We pass the DFAs through the pretrained (and frozen) DFA encoder and use these DFA embeddings along with the ID and state embeddings to compute a task embedding, which is a latent representation of the current task of the agent, potentially encoding information about whether to help another agent or work on its own task.
This task embedding, along with the ID and state embeddings, is then used to compute the agent's next action.
We also use a critic (not shown in \Cref{fig:arch}) predicting state values and use this value function for the optimal assignment of tasks to agents as described in \Cref{sec:assign}.
Below, we present the details of the policy architecture in~\Cref{fig:arch}.
\begin{itemize}[nosep, noitemsep, leftmargin=*]
    \item \textbf{Embed Layer} maps agent IDs to 32-dimensional vectors.
    \item \textbf{CNN} is a convolutional neural network with layers [16, 32, 64], each with a 2x2 kernel and ReLU activation.
    \item \textbf{Task Encoder} is a multilayer perceptron with layers [256, 256, 32], each with a tanh activation function.
    \item \textbf{Policy} is a multilayer perceptron with layers [64, 64, 64, 5], each with a ReLU activation, where 5 is the number of actions, i.e., four cardinal directions and a noop action. It outputs a categorical distribution over the actions, and to get an action, we sample from this distribution.
    \item \textbf{Value}, not presented in \Cref{fig:arch}, is a multilayer perceptron with layers [64, 64, 1], each with a ReLU activation.
\end{itemize}

We train using Independent Proximal Policy Optimization (IPPO)~\cite{de2020independent,lu2022discovered,rutherford2024jaxmarl}---the hyperparameters used for \buttonstwo and \roomstwo environments are given in \Cref{table:hyperparams}.
We use the same hyperparameters for \buttonsfour and \roomsfour, except that we set the entropy coefficient to $0.05$ to further encourage exploration.
Finally, we note that in \buttonsfour and \roomsfour, for policies without pretrained RAD Embeddings, we had to limit the number of environments to $32$ because otherwise the GPU memory got exhausted trying to fit all parameters, which highlights the use of RAD Embeddings as a means to reduce the number of learned parameters.

\begin{table}[h]
\centering
\begin{tabular}{ll}
\toprule
\textbf{Hyperparameter} & \textbf{Value} \\
\midrule
Learning rate & $3 \times 10^{-4}$ \\
Number of environments & $64$ \\
Number of steps per rollout & $1024$ \\
Total timesteps & $10{,}000{,}000$ \\
Update epochs & $8$ \\
Number of minibatches & $8$ \\
Discount factor ($\gamma$) & $0.99$ \\
GAE $\lambda$ & $0.95$ \\
Clipping coefficient & $0.2$ \\
Entropy coefficient & $0.02$ \\
Entropy coefficient decay & False \\
Value function coefficient & $0.5$ \\
Max gradient norm & $0.5$ \\
\bottomrule
\end{tabular}
\caption{IPPO hyperparameters used in experiments.}
\label{table:hyperparams}
\end{table}

\newpage

\begin{figure*}[t]
    \centering
    \begin{subfigure}{0.3\linewidth}
        \centering
        \includegraphics[width=\linewidth]{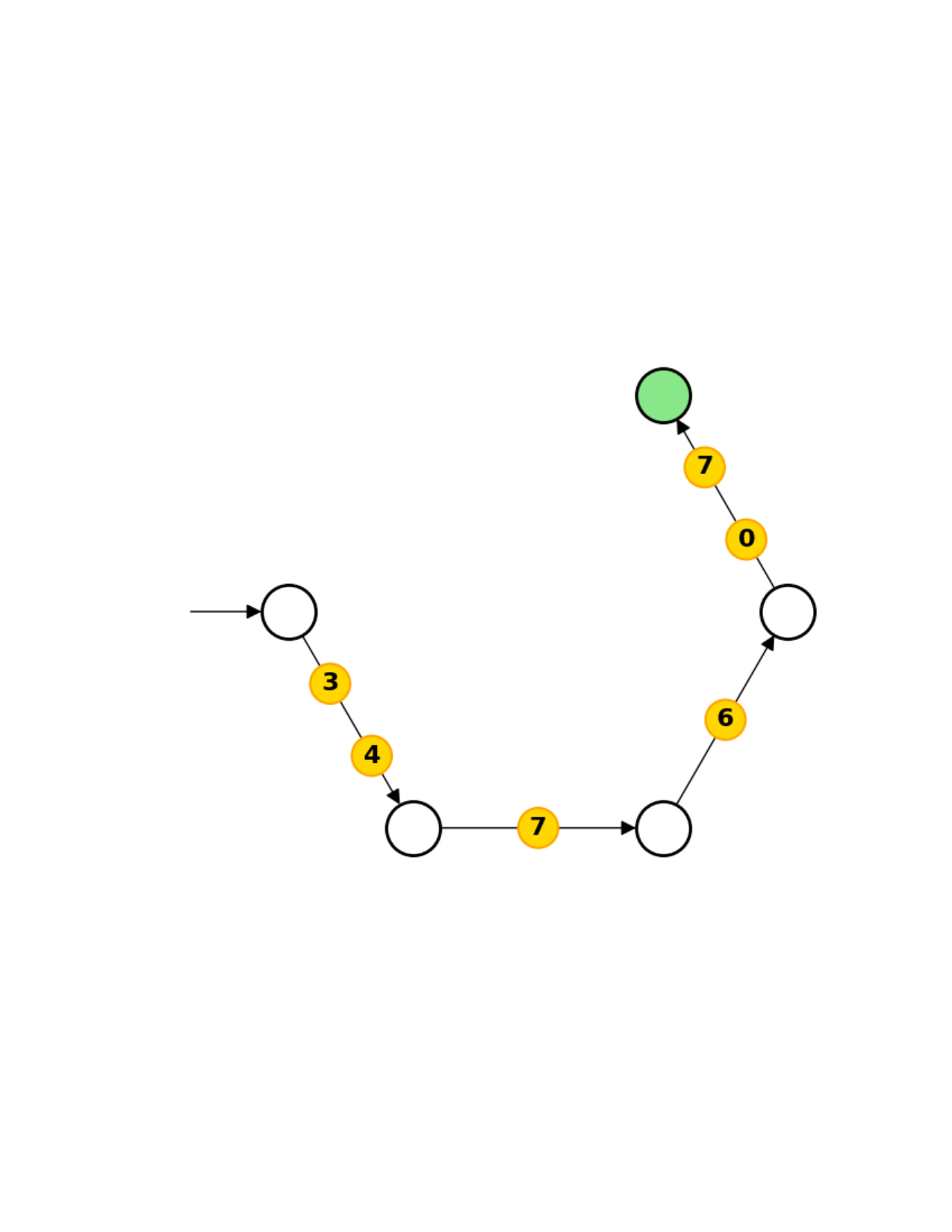}
        \caption{A \reach DFA of \Cref{alg:reach}}
        \label{fig:a_reach_dfa}
    \end{subfigure}
    \hfill
    \begin{subfigure}{0.3\linewidth}
        \centering
        \includegraphics[width=\linewidth]{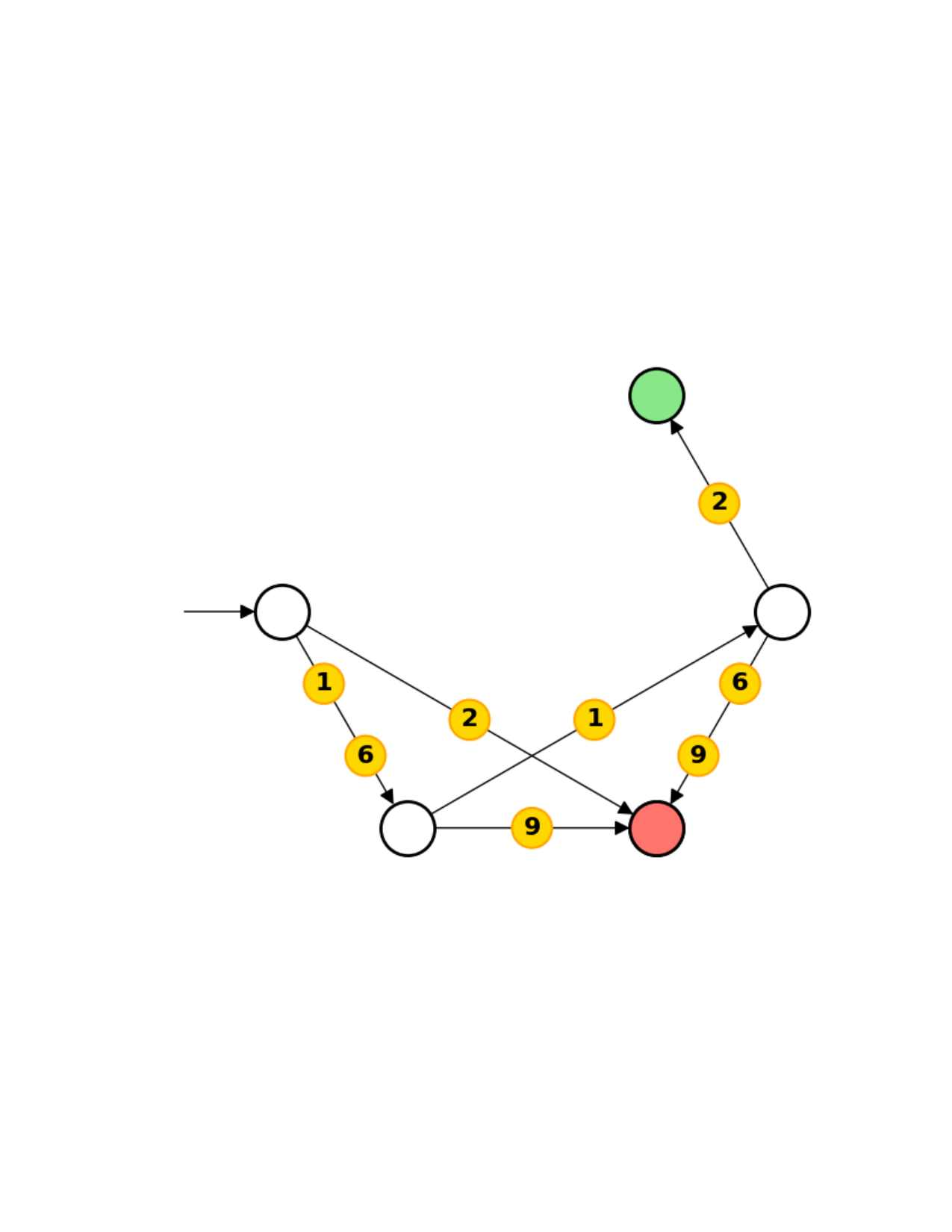}
        \caption{A \reachavoid DFA of \Cref{alg:reachavoid}}
        \label{fig:a_reachavoid_dfa}
    \end{subfigure}
    \hfill
    \begin{subfigure}{0.3\linewidth}
        \centering
        \includegraphics[width=\linewidth]{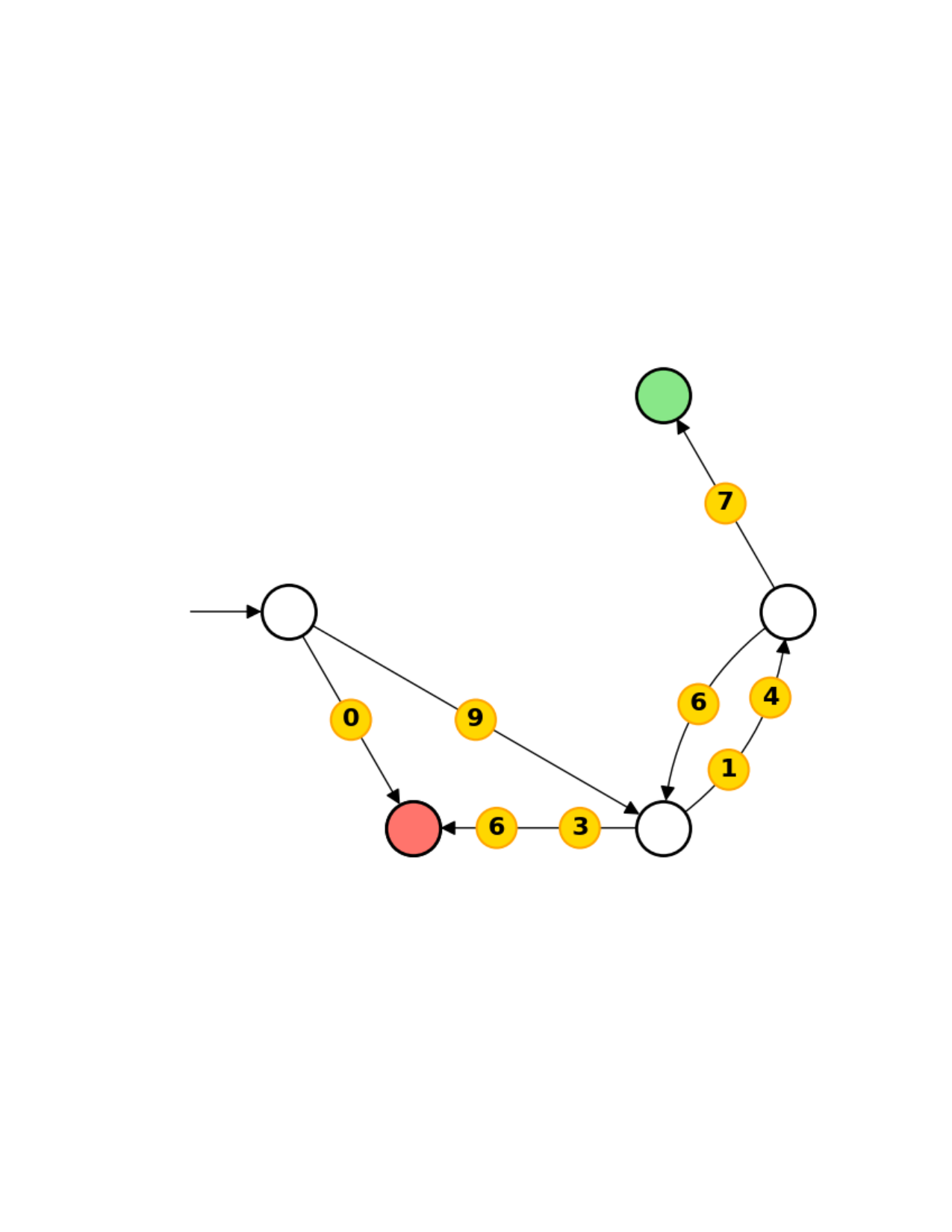}
        \caption{A \rad DFA of \Cref{alg:rad}}
        \label{fig:a_rad_dfa}
    \end{subfigure}
    \caption{Different types of DFA tasks sampled from the task distributions considered in the paper.}
    \label{fig:some_dfas}
\end{figure*}

\begin{algorithm}[H]
\caption{\reach DFA Sampler}
\label{alg:reach}
\begin{algorithmic}[1]
\STATE Sample a sequence one-step \reach problems of length $k - 1$ as a DFA, where $k \sim \text{Uniform}$, call it $\dfa$.
\STATE For each stuttering symbol, i.e., a symbol that does not change the state, of $\dfa$, make it \reach with $0.1$ probability; otherwise keep it unchanged.
\STATE {\bfseries return} $\dfamin\left(\dfa\right)$
\end{algorithmic}
\end{algorithm}

\begin{algorithm}[H]
\caption{\reachavoid DFA Sampler}
\label{alg:reachavoid}
\begin{algorithmic}[1]
\STATE Sample a sequence one-step \reachavoid problems of length $k - 1$ as a DFA, where $k \sim \text{Uniform}$, call it $\dfa$.
\STATE For each stuttering symbol, i.e., a symbol that does not change the state, of $\dfa$, make it \reach or \avoid with $0.1$ probability; otherwise keep it unchanged.
\STATE {\bfseries return} $\dfamin\left(\dfa\right)$
\end{algorithmic}
\end{algorithm}

\begin{algorithm}[H]
\caption{\reachavoidderived (\rad) DFA Sampler}
\label{alg:rad}
\begin{algorithmic}[1]
\STATE Sample a sequence of one-step \reach and \reachavoid problems of length $k - 1$ as a DFA, where $k \sim \text{Uniform}$ and at each step, flip a coin to decide whether to include the \avoid part, call it $\dfa$.
\STATE For each stuttering symbol, i.e., a symbol that does not change the state, of $\dfa$, make it \reach or \avoid with $0.1$ probability; otherwise keep it unchanged.
\STATE $\dfa \gets \dfamin\left(\dfa\right)$
\FOR{$i = 1$ to $m$ (where $m \sim \text{Uniform}$)}
    \STATE $\dfa' \gets$ Mutate $\dfa$, i.e., randomly change a transition
    \STATE $\dfa' \gets$ Make accepting states of $\dfa'$ sinks
    \STATE $\dfa' \gets \dfamin\left(\dfa'\right)$
    \IF{$\dfa'$ is not a trivial DFA (i.e., $\dfa_\top$ or $\dfa_\bot$)}
        \STATE $\dfa \gets \dfa'$
    \ENDIF
\ENDFOR
\STATE {\bfseries return} $\dfa$
\end{algorithmic}
\end{algorithm}

\begin{algorithm}[H]
\caption{Multi-Agent DFA Sampler}
\label{alg:ma_dfa_dist}
\begin{algorithmic}[1]
\STATE Sample $n_{\text{trivial}}\sim\operatorname{Uniform}[0, n)$, where $n$ is the number of agents in the multi-agent environment.
\STATE Generate $n_{\text{trivial}}$ many accepting DFAs, call it $\dfajointstate_{\text{trivial}}$.
\STATE Sample $n - n_{\text{trivial}}$ many DFAs form $\mdpinitdist_{\dfaset}$, where $\mdpinitdist_{\dfaset}$ is the given DFA distribution, call it $\dfajointstate_{\text{non-trivial}}$.
\STATE $\dfajointstate \gets$ Concatenate $\dfajointstate_{\text{trivial}}$ and $\dfajointstate_{\text{non-trivial}}$
\STATE $\dfajointstate \gets$ Shuffle $\dfajointstate$
\STATE {\bfseries return} $\dfajointstate$
\end{algorithmic}
\end{algorithm}

\newpage

\begin{figure*}[t]
    \centering
    \includegraphics[width=\linewidth]{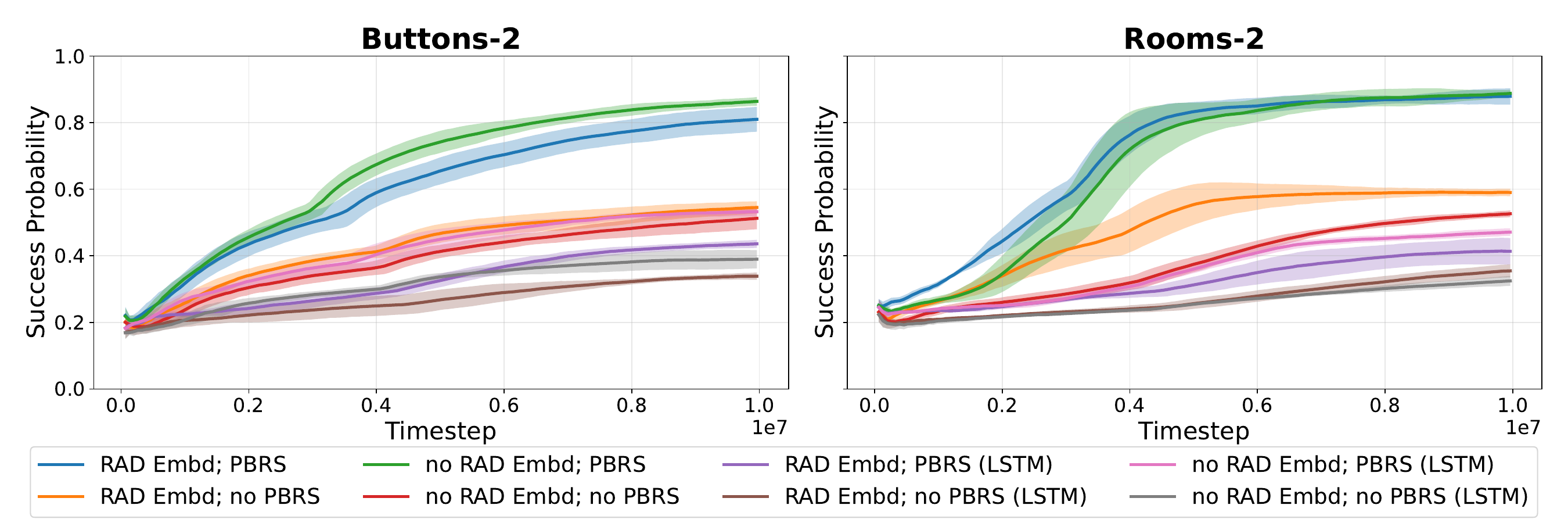}
    \caption{Success probabilities of learned policies throughout training, reported over 5 random seeds---shaded regions indicate standard deviation. ``RAD Embd; PBRS'' refers to Markovian policies conditioning on pretrained RAD Embeddings and trained with the shaped reward, i.e., the full solution proposed in \Cref{sec:dfa-marl}, and ``(LSTM)'' suffix denotes history-dependent policies.}
    \label{fig:plots_lstm}
\end{figure*}

\begin{figure*}[t]
    \centering
    \includegraphics[width=\linewidth]{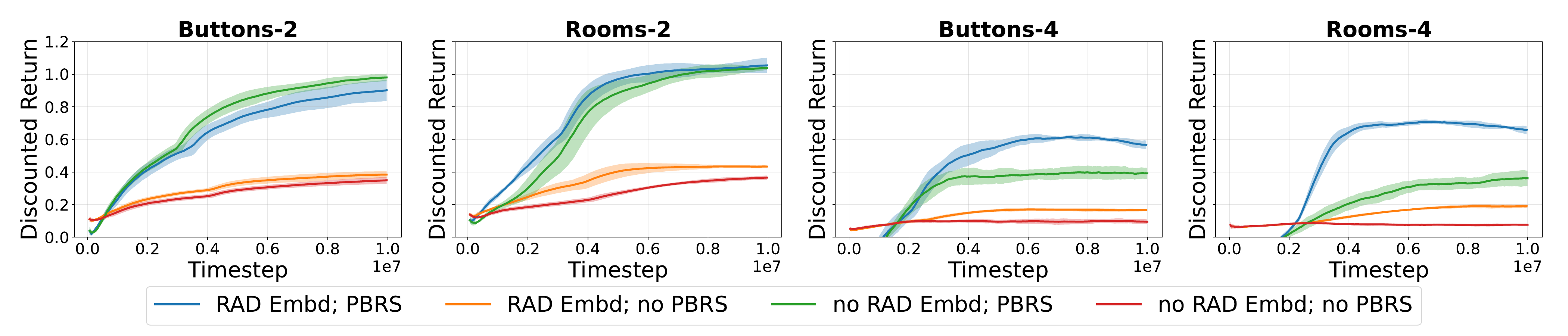}
    \caption{Discounted returns of learned policies throughout training, reported over 5 random seeds---shaded regions indicate standard deviation. ``RAD Embd; PBRS'' refers to Markovian policies conditioning on pretrained RAD Embeddings and trained with the shaped reward, i.e., the full solution proposed in \Cref{sec:dfa-marl}.  Results with the history-dependent baseline are in \Cref{fig:disc_return_mean_lstm}.}
    \label{fig:disc_return_mean}
\end{figure*}

\begin{figure*}[t]
    \centering
    \includegraphics[width=\linewidth]{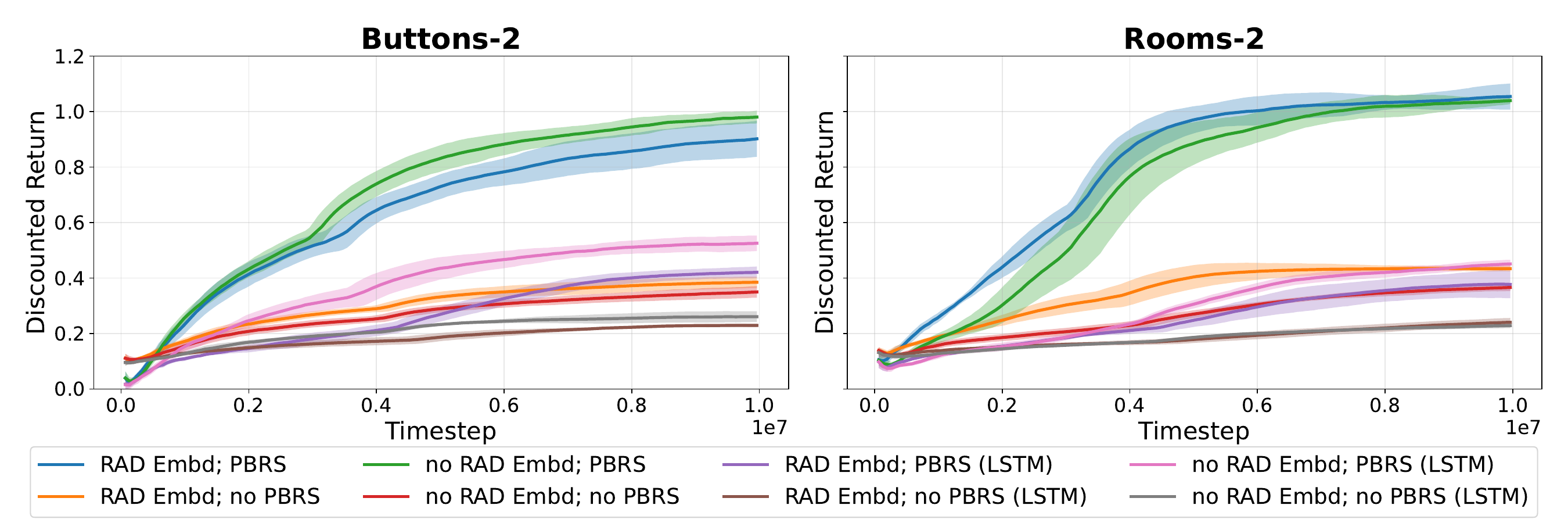}
    \caption{Discounted returns of learned policies throughout training, reported over 5 random seeds---shaded regions indicate standard deviation. ``RAD Embd; PBRS'' refers to Markovian policies conditioning on pretrained RAD Embeddings and trained with the shaped reward, i.e., the full solution proposed in \Cref{sec:dfa-marl}, and ``(LSTM)'' suffix denotes history-dependent policies.}
    \label{fig:disc_return_mean_lstm}
\end{figure*}

\begin{figure*}
    \centering
    \includegraphics[width=0.9\linewidth]{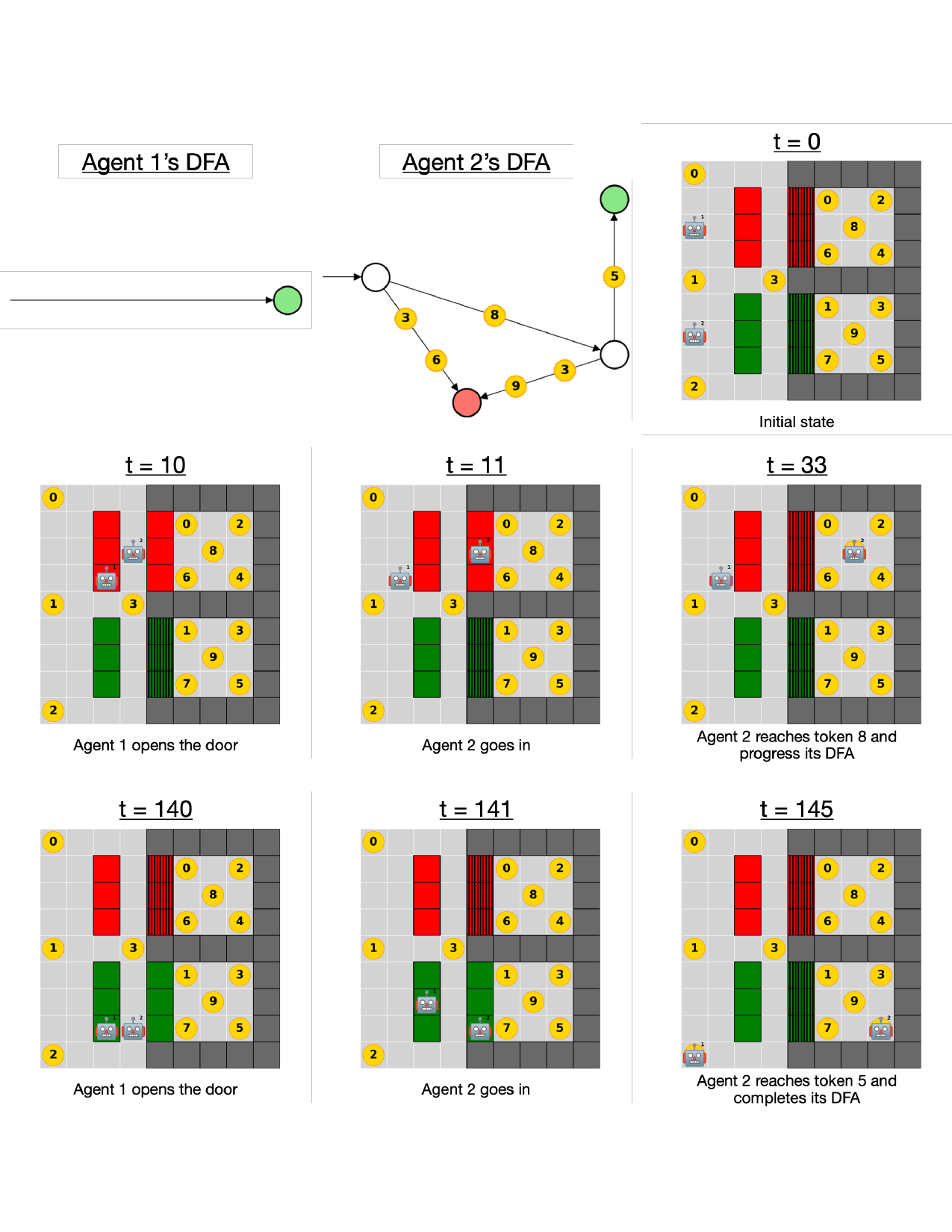}
    \caption{Agent 1 in the helper agent role, assisting agent 2 throughout the episode.}
    \label{fig:buttons2_helper_agent}
\end{figure*}

\begin{figure*}
    \centering
    \includegraphics[width=0.9\linewidth]{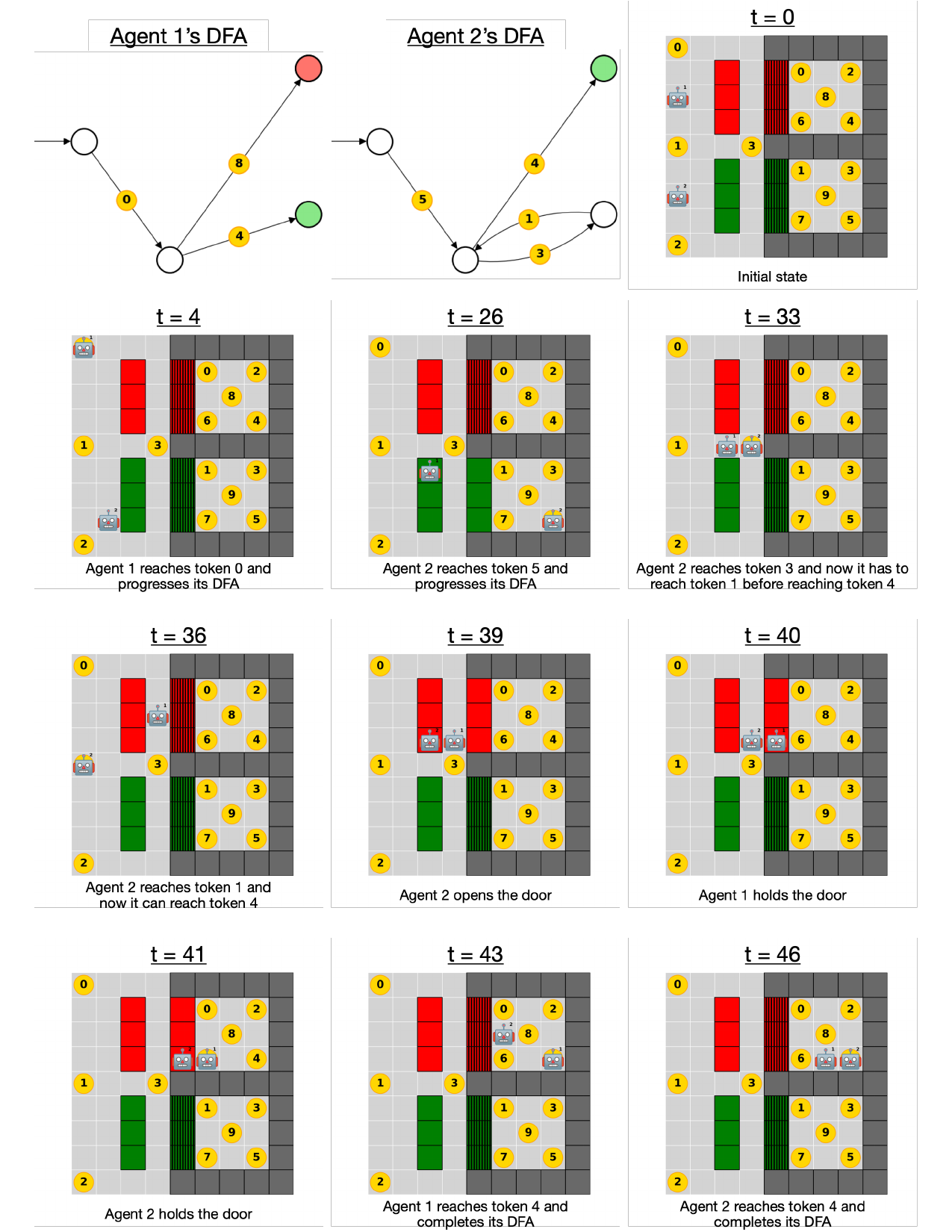}
    \caption{Agents holding the door for each other and completing both tasks.}
    \label{fig:buttons2_hold_the_door}
\end{figure*}

\begin{figure*}
    \centering
    \includegraphics[width=0.9\linewidth]{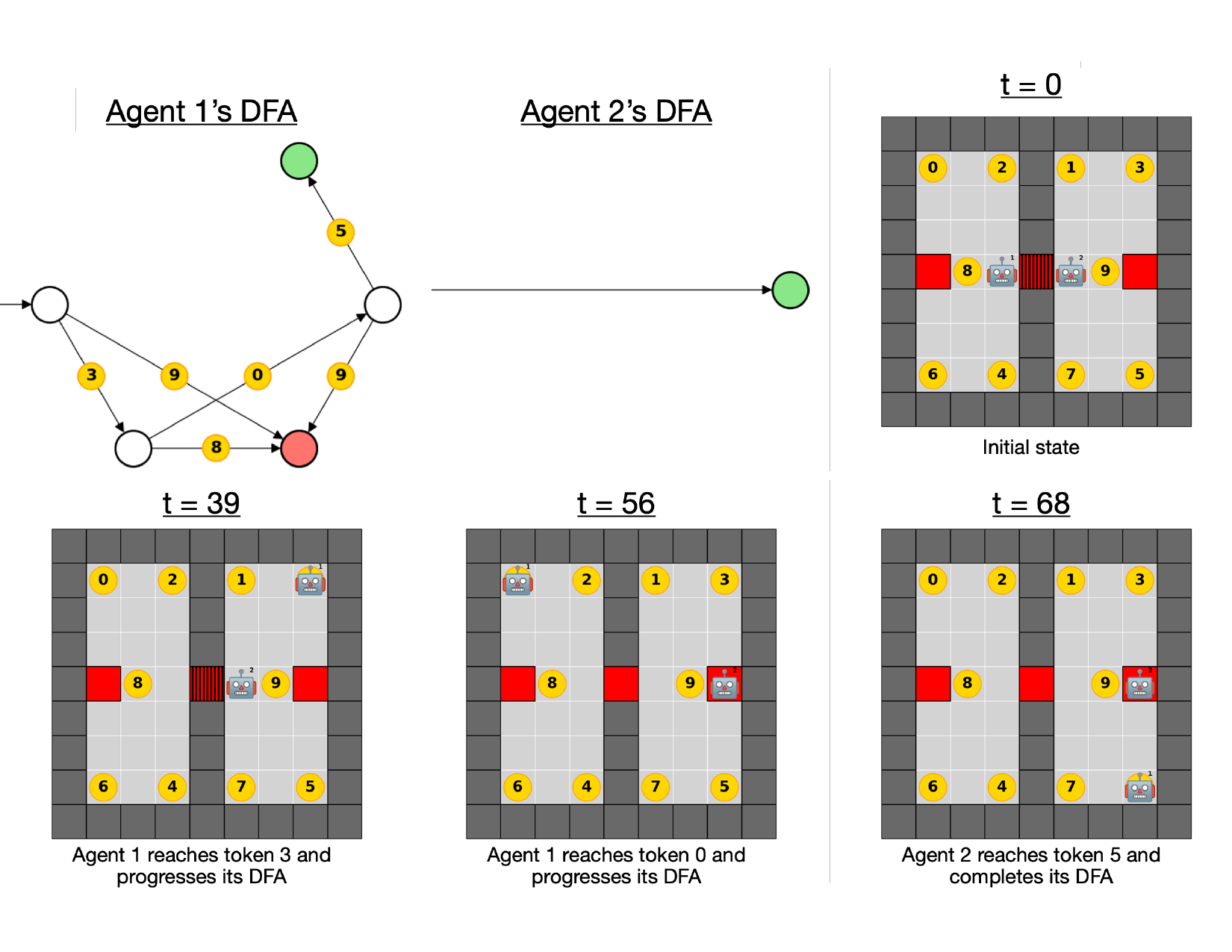}
    \caption{Agent 2 in the helper agent role, assisting agent 1 throughout the episode.}
    \label{fig:rooms2_helper_agent}
\end{figure*}

\begin{figure*}
    \centering
    \includegraphics[width=0.9\linewidth]{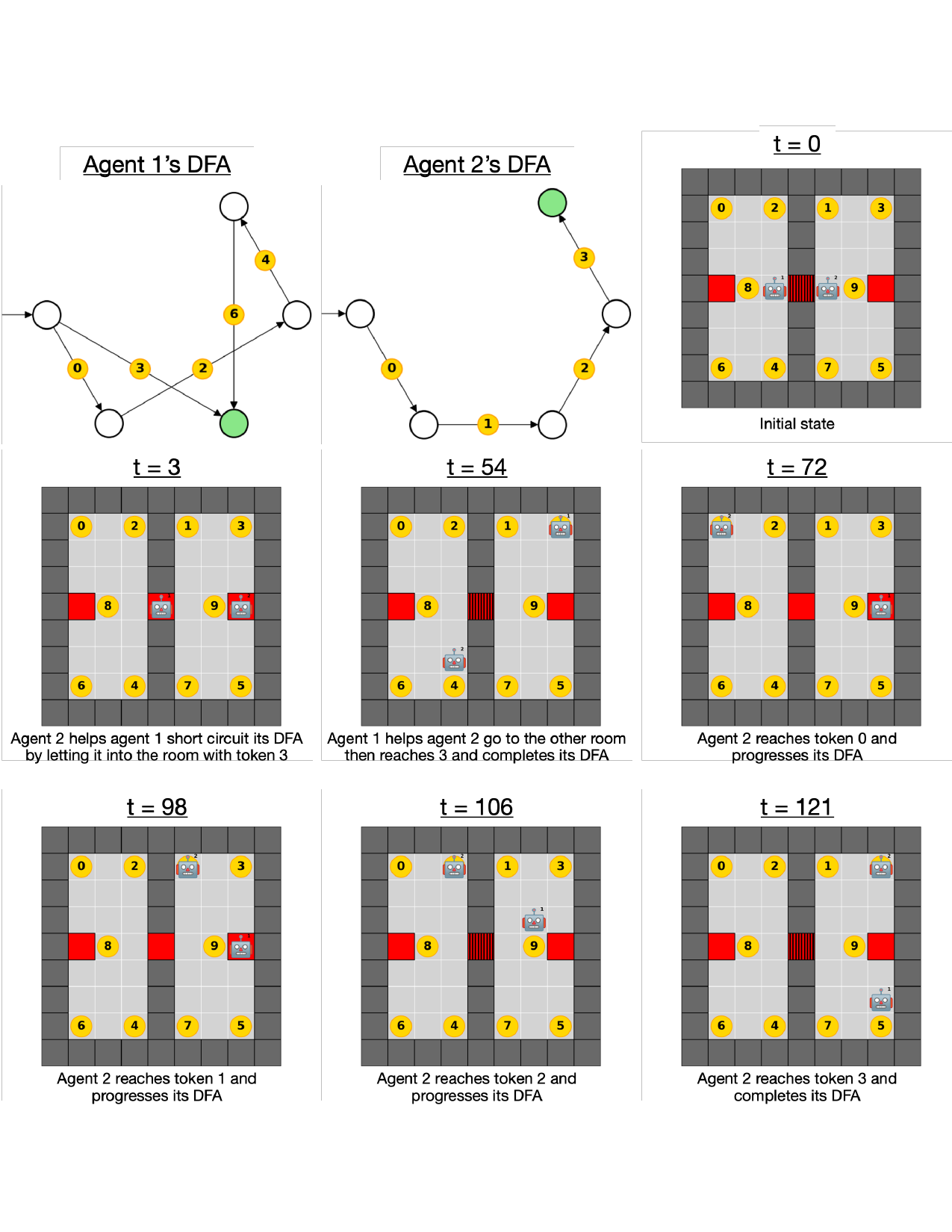}
    \caption{Agents picking the shortest path in agent 1's DFA and completing both tasks.}
    \label{fig:rooms2_short_circuit}
\end{figure*}

\end{document}